\newtheorem{theorem}{Theorem}
\newtheorem{Lemma}{Lemma}
\newtheorem*{theorem2}{Theorem 2}
\newtheorem*{theorem3}{Theorem 3}
\newtheorem*{theorem1a}{Theorem 1}
\begin{document}
\title{
Quantifying necessary quantum resources for nonlocality}
\author{Lucas Tendick}
\email{Lucas.Tendick@hhu.de}
\author{Hermann Kampermann }
\author{Dagmar Bru\ss}
\affiliation{Institute for Theoretical Physics III, Heinrich-Heine-Universit\"at D\"usseldorf, D-40225 D\"usseldorf, Germany}

\begin{abstract}
Nonlocality is one of the most important resources for quantum information protocols. The observation of nonlocal correlations in a Bell experiment is the result of appropriately chosen measurements and quantum states. We quantify the minimal purity to achieve a certain Bell value for any Bell operator. Since purity is the most fundamental resource of a quantum state, this enables us also to quantify the necessary coherence, discord, and entanglement for a given violation of two-qubit correlation inequalities. Our results shine new light on the CHSH inequality by showing that for a fixed Bell violation an increase in the measurement resources does not always lead to a decrease of the minimal state resources.
\end{abstract}
\maketitle
It is arguably one of the most astonishing features of quantum theory that local measurements performed on certain quantum states can lead to the phenomenon of quantum nonlocality \cite{Nonlocality_review}. That is, the measurement statistics cannot be explained classically as they are not compatible with the principle of local realism. Mathematically this can be witnessed by the violation of a so-called Bell inequality \cite{Bell_seminal}. Even though nonlocality has been studied ever since the foundations of quantum theory \cite{EPR_paper}, it is not yet completely understood. \\ In this Letter, we address the following fundamental question: what are the required properties of a quantum state and its measurements in order to exhibit nonlocality? In other words, we quantify the interplay between the resource of quantum nonlocality \cite{Nonlocality_resource} and other resources like purity \cite{Purity_resource}, coherence \cite{Coherence_Aberg,RevModPhys_coherence}, discord \cite{Discord}, and most famously entanglement \cite{RevModPhys.81.865} on the state side and measurement incompatibility \cite{Incomop_resource} on the measurement side. The physical situation we are going to consider is illustrated in Fig. \ref{Introduction_figure}. \\ 
\indent Historically, the interplay between nonlocality and entanglement is the most studied one and therefore the best established one among these concepts. It is for instance well known that the existence of entanglement is necessary and for pure states even sufficient \cite{GISIN_pure} for quantum nonlocality while this does not hold true for mixed states \cite{PhysRevA.40.4277}. \\ More recently, the importance of the other mentioned quantum resources found growing attention  \cite{PhysRevA.77.042303,PhysRevLett.104.080501,Zhu2017,PhysRevLett.115.020403,PhysRevLett.114.210401,PhysRevA.102.012420,PhysRevA.97.012129,Bene2018}. Notably, for quantum states and measurements qualitative hierarchical relations between these resources have been established \cite{Purity_resource,PhysRevA.93.052112}. 
However, we are still in need of exact quantitative relations, for a complete view of the grand scheme. \\
Here, we are taking a step towards this aim by deriving from the spectrum of any Bell operator an analytical expression for the minimal purity of a quantum state that is needed to achieve some fixed amount of nonlocality in terms of a Bell inequality violation. This result is general, i.e. it holds for any dimension, any number of parties, measurement settings, and outcomes. 
\begin{figure}[h!]
\includegraphics[scale =0.45]{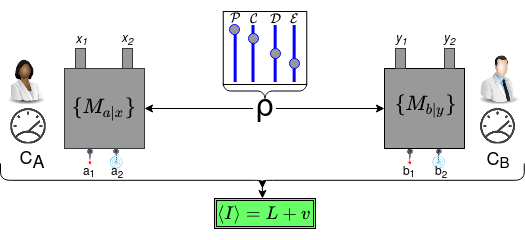}
 \caption{Illustration of a Bell experiment. A quantum state $\rho$ with adjustable resources purity $\mathcal{P}$, coherence $\mathcal{C}$, discord $\mathcal{D}$, and entanglement $\mathcal{E}$ is distributed to Alice and Bob who perform measurements $\lbrace M_{a \vert x} \rbrace$ and $\lbrace M_{b \vert y} \rbrace$ with also adjustable incompatibilities $C_A$ and $C_B$. The interplay between the state resources and the measurement resources results in the observed Bell value $\langle I \rangle$. Minimal resource requirements for an observed Bell violation $v$ beyond the local bound $L$ are derived in the text.}
 \label{Introduction_figure}
\end{figure}
In a second step, we show that this criterion also provides the minimal amount of coherence, discord, and entanglement needed for the violation of an inequality with any Bell-diagonal Bell operator, which is of particular interest for the case of two-qubit systems. As an application of our results, we present a closed expression for the maximal possible violation of the Clauser-Horne-Shimony-Holt (CHSH) inequality \cite{PhysRevLett.23.880} given some fixed amount of entanglement or purity and a given level of measurement incompatibility. This enables us to establish a surprising link between the incompatibility of quantum measurements and the minimal entanglement needed. More precisely, we show that highly incompatible projective measurements need, in some instances, a higher amount of entanglement in order to show some fixed CHSH nonlocality than less incompatible projective measurements. In other words, a smaller resource on the measurement side does \textit{not} require a higher resource on the state side, which is counterintuitive. An analogous result follows for the case of the two-setting linear steering inequality \cite{PhysRevA.80.032112}.\\
\indent \textit{Preliminaries.}\textemdash
In general, we are considering Hermitian Bell operators of the form
\begin{align}
I = \sum\limits_{a,b,x,y} c_{ab \vert xy} \ M_{a \vert x} \otimes M_{b \vert y},      
\end{align}
with real coefficients $ c_{ab \vert xy} $ and positive semidefinite operators $ M_{a \vert x} $, $ M_{b \vert y} $ with outcomes $ a,b $ and inputs $ x,y $ which form a positive operator valued measure  (POVM) such that $\sum_a M_{a \vert x} = \mathds{1} $ and $\sum_b M_{b \vert y} = \mathds{1} $. 
A Bell inequality is given by 
\begin{align}
\mathrm{Tr}(I \rho_{LHV}) \leq L,    
\end{align} 
with the (real) local bound $ L $ for all correlations obeying a so-called local hidden-variable model (LHV). This inequality may be violated by some entangled quantum states $ \rho $. We call states which violate (at least) one Bell inequality nonlocal. The achieved Bell value is denoted by $ \langle I \rangle = \mathrm{Tr}(I \rho) = L+v $ where $ v > 0 $ is the amount by which the bound $ L $ is violated. During the course of this Letter, we will often use the spectral decomposition of a quantum state $ \rho = \sum_i^d \lambda_i \vert \phi_i  \rangle \langle \phi_i \vert $ with $ \lambda_i \geq 0$ and $\sum_i^d \lambda_i = 1 $ and the Bell operator $ I = \sum_j^d \mu_j \vert \Psi_j \rangle \langle \Psi_j \vert$ with real eigenvalues $\mu_j$ where $d$ is the dimension of $\rho$. The sets $ \lbrace \vert \phi_i  \rangle \rbrace $, $ \lbrace \vert \Psi_j  \rangle \rbrace $ form orthonormal bases. We order (without loss of generality) the eigenvalues in descending order, i.e. $ \lambda_i \geq \lambda_s $ for $ i < s  $ and $ \mu_j \geq \mu_t $ for $ j < t  $. \\
\indent \textit{Main task.}\textemdash
 We want to quantify the minimal quantum resources of a state $\rho$ in order to achieve some given violation $v$ for a (fixed) Bell operator $I$. Thus, we want to minimize a general resource quantifier $R(\rho)$ such that $ \rho $ is consistent with the observed data in terms of the Bell expectation value $\langle I \rangle $, i.e. we want to find
\begin{align}
R^* = \min_{\rho} \lbrace R(\rho) \ \vert \  \langle I \rangle = \mathrm{Tr}(\rho I) = L+v \rbrace.   \label{main_opt}
\end{align} 
Optimizations of this form naturally occur in inference schemes based on entanglement witnesses \cite{Audenaert2006,Eisert2007,PhysRevLett.98.110502,Cavalcanti2006}. The important difference to the task we consider here is that nonlocality itself is also a resource. In the context of nonlocality this problem has only been addressed for the CHSH inequality \cite{PhysRevA.59.1799,Batle2011,PhysRevA.65.024304,Batle2001,PhysRevA.88.052105,PhysRevLett.89.170401} with the main focus on entanglement. \\ 
Let us specify what we mean by the term quantum resource without going into details, for more information see \cite{RevModPhys.91.025001}. In any resource theory, one first defines the states which are no resource, the so-called void-states (or free-states) which constitute the set $ \mathcal{V} $. Second, one defines the (maximal) set of operations $ \Lambda $ (free operations) that cannot turn a void state into a resource state. Finally, one has to find measures $ R $ which quantify the respective resource. E.g. in the resource theory of entanglement, the free states are the separable ones, the free operations are local operations and classical communication (LOCC) and a quantifier is the relative entropy of entanglement. \\
 \textit{Purity.}\textemdash Our main results are analytical results for the minimal purity of $\rho$ needed to achieve the Bell value $\langle I \rangle = L+v$  for a general Bell operator $I$ of any dimension $d$, any number of parties $n$, settings $k$, and outcomes $m$. We will employ two different but common purity measures. First, we consider a generalized robustness quantifier
 \begin{align}
G_R(\rho) := \min_{\tau}  \left\{ x \vert x \geq 0, \exists \ \text{a state} \ \tau, \dfrac{\rho+x \tau}{1+x} \in \mathcal{V}  \right\}, \label{gen_robustness_purity}
\end{align} 
where the set $ \mathcal{V} $ consists of the void states. Since $\tau$ can be any state, $ G_R(\rho) $ can be seen as general noise robustness of $\rho$ against a void set $ \mathcal{V} $ and can therefore be used to quantify a general resource $G$. For purity the set $ \mathcal{V} $ contains only the maximally mixed state $ \mathds{1}/d $. It was shown in \cite{Luo2019} that 
\begin{align}
P_R(\rho) = d \lambda_1(\rho)-1.
\end{align}
Therefore, minimizing $ P_R(\rho) $ reduces to minimizing $ \lambda_1(\rho) $. In order to show our main result we first answer the (easier to solve) reverse question: given $P_R(\rho)$ what is the maximal possible Bell value $\langle I \rangle_{\text{max}} = L + v_{\text{max}} $ the state $\rho$ can achieve for a fixed Bell operator $I$?
\begin{theorem}
\label{thrm1}
Given the Hermitian operator $ I = \sum_{j=1}^d \mu_j \vert \Psi_j \rangle \langle \Psi_j \vert$ with $ \mu_j\geq\mu_t $ for $j<t$ and a fixed robustness of purity $P_R(\rho)$ of a quantum state $\rho$. The maximal expectation value $\langle I \rangle_{\text{max}} $ can be achieved by  $\rho = \sum_{i=1}^r \lambda_i \vert \Psi_i \rangle \langle \Psi_i \vert $, where $ \lambda_i \geq 0$, $\sum_{i=1}^r \lambda_i = 1 $, $ \lambda_i \geq \lambda_s $ for $ i < s $, and is given by
\begin{align}
\langle I \rangle_{\text{max}} = \sum\limits_{j=1}^r  \mu_j \lambda_j, \label{max_violation}
\end{align}
where $r$ is an integer s.t. $ \dfrac{1}{r-1} > \lambda_1 \geq \dfrac{1}{r} $ and all eigenvalues $\lambda_i$ for $i \in \lbrace 1, \cdots r-1 \rbrace$ are equal to $\lambda_1 = (1+P_R)/d$.
\end{theorem}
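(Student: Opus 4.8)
The plan is to decouple the two ingredients of $\rho$ — its eigenvectors and its spectrum — and then solve the resulting linear program over probability vectors. First I would use the cited identity $P_R(\rho)=d\lambda_1(\rho)-1$ to translate the constraint: fixing $P_R(\rho)$ is the same as fixing the largest eigenvalue $\lambda_1(\rho)=(1+P_R)/d=:p$, so $\rho$ ranges over (a face of) the convex set $\{\rho\ge 0:\mathrm{Tr}\,\rho=1,\ \rho\le p\,\mathds{1}\}$. Writing $\rho=\sum_i\lambda_i\vert\phi_i\rangle\langle\phi_i\vert$ one has $\langle I\rangle=\sum_{i,j}\lambda_i\mu_j\vert\langle\phi_i\vert\Psi_j\rangle\vert^2$, and the matrix with entries $\vert\langle\phi_i\vert\Psi_j\rangle\vert^2$ is doubly stochastic. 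By Birkhoff--von Neumann it is a convex combination of permutation matrices, hence $\langle I\rangle\le\max_\pi\sum_i\lambda_i\mu_{\pi(i)}$, and since $(\lambda_i)$ and $(\mu_j)$ are both sorted in descending order the rearrangement inequality gives $\langle I\rangle\le\sum_i\lambda_i\mu_i$, with equality when $\vert\phi_i\rangle=\vert\Psi_i\rangle$. This reduces the task to maximizing $\sum_{i=1}^d\mu_i\lambda_i$ over $\lambda_i\ge 0$, $\sum_i\lambda_i=1$, $\lambda_i\le p$.

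Second, I would settle this linear program by a water-filling argument made rigorous through summation by parts. Setting $S_k=\sum_{i=1}^k\lambda_i$ one gets $\sum_i\mu_i\lambda_i=\sum_{k=1}^{d-1}S_k(\mu_k-\mu_{k+1})+\mu_d$; since $\mu_k-\mu_{k+1}\ge 0$ and the constraints force $S_k\le\min(kp,1)$, one obtains $\sum_i\mu_i\lambda_i\le\sum_k\min(kp,1)(\mu_k-\mu_{k+1})+\mu_d$. Choosing the integer $r$ with $\tfrac1r\le p<\tfrac1{r-1}$ makes $\min(kp,1)=kp$ for $k\le r-1$ and $\min(kp,1)=1$ for $k\ge r$; resumming by parts collapses the right-hand side to $p\sum_{j=1}^{r-1}\mu_j+(1-(r-1)p)\mu_r=\sum_{j=1}^r\mu_j\lambda_j$ with $\lambda_1=\dots=\lambda_{r-1}=p$ and $\lambda_r=1-(r-1)p$. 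The same $(\lambda_i)$ saturates every inequality used (each $S_k$ meets its bound, and one checks $0\le\lambda_r\le p$ so the vector is admissible), so the value is attained; combined with the eigenbasis alignment from the first step this gives exactly the state $\rho=\sum_{i=1}^r\lambda_i\vert\Psi_i\rangle\langle\Psi_i\vert$ of the statement.

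I expect the eigenbasis-alignment step to be the conceptual crux: one must argue that, under a purity budget, it is simultaneously optimal to point the eigenvectors of $\rho$ along the top eigenvectors of $I$ and to make its spectrum as peaked as the budget allows, and these choices interact. The Birkhoff/rearrangement packaging handles this cleanly; alternatively one may note that a linear functional on the convex body $\{\rho\ge 0:\mathrm{Tr}\,\rho=1,\ \rho\le p\,\mathds{1}\}$ is maximized at an extreme point, and its extreme points are precisely the ``flat-top'' states $p\sum_{i\in A}\vert e_i\rangle\langle e_i\vert+(1-(r-1)p)\vert e_r\rangle\langle e_r\vert$ with $\vert A\vert=r-1$ for some orthonormal frame, after which optimizing over the frame immediately selects the top eigenvectors of $I$. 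The remaining bookkeeping — that $r$ is well defined for every $p\in(0,1]$ (with $1/0=\infty$ covering the pure-state case $r=1$, $\langle I\rangle_{\text{max}}=\mu_1$), and that the constructed $\rho$ indeed has $\lambda_1=p$ — is routine.
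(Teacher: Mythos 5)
Your proposal is correct and follows essentially the same two-step route as the paper: fix $\lambda_1=(1+P_R)/d$, align the eigenbasis of $\rho$ with that of $I$ via a trace-inequality argument (the paper cites Ruhe's inequality where you rederive it from Birkhoff--von Neumann plus rearrangement), and then show the optimal spectrum is the flat-top distribution. Your Abel-summation bound $S_k\le\min(kp,1)$ in fact makes the paper's informal ``make all $\lambda_i$ as large as possible'' step rigorous, and your check that the flat-top vector saturates every inequality (so the relaxation $\lambda_1\le p$ is tight) closes the argument cleanly.
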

\begin{proof}
The theorem follows from the generalization of Ruhe's trace inequality \cite{Marshall2011} and the fact that it is optimal to choose all eigenvalues $\lambda_i$ equal to $\lambda_1$ except the lowest non-zero one, which is given by normalization. The integer $r$ defines the rank of the optimal $\rho$ which we construct from the $\lbrace \lambda_i \rbrace$ and the eigenstates of $I$. This choice is unique for non-degenerate eigenvalues of $I$. See \cite{Supplemental_Material} for the specifics of the proof.
\end{proof}
Theorem. \ref{thrm1} can be used reversely (see Lemma \ref{lemma1} in the Supplemental Material \cite{Supplemental_Material}), which provides our first main result. Namely, for given $ \langle I \rangle_{\text{max}} $ we can use eq. (\ref{max_violation}) to determine the minimal $P_R(\rho)$ or $\lambda_1(\rho) $ needed to achieve the Bell value $ \langle I \rangle_{\text{max}} $. In order to determine $\lambda_1(\rho) $ one needs to find the integer $r$ s.t. Theorem \ref{thrm1} is valid. The usefulness of Theorem \ref{thrm1} lies in its simplicity. Not only does it allow to minimize the generalized robustness of purity $P_R(\rho)$ for a fixed expectation value of the most general Bell operator via an easily accessible criterion, also one needs to check at most $d$ linear equations.\\
Second, we show that $\langle I \rangle_{\text{max}} $ can also be determined analytically as a function of the most commonly used measure of purity, the Rényi $2$-purity $ \mathcal{P}_2 = \mathcal{P}_2(\rho) = \log_2{(d \mathrm{Tr(\rho^2)})} $ \cite{Purity_resource}.
\begin{theorem}
\label{thrm2}
Given the Hermitian operator $ I = \sum_{j=1}^d \mu_j \vert \Psi_j \rangle \langle \Psi_j \vert$ with $ \mu_j\geq\mu_t $ for $j<t$ and a fixed Rényi $2$-purity $\mathcal{P}_2(\rho)$ of a quantum state $\rho$. The maximal expectation value $\langle I \rangle_{\text{max}}$ can be achieved by $\rho = \sum_{i=1}^r \lambda_i \vert \Psi_i \rangle \langle \Psi_i \vert $, where $ \lambda_i \geq 0$, $\sum_{i=1}^r \lambda_i = 1 $, $ \lambda_i \geq \lambda_s $ for $ i < s $, and is given by
\begin{align} 
\langle I \rangle_{\text{max}} = \dfrac{G+\sqrt{(1-\dfrac{r}{d} 2^{\mathcal{P}_2} )(G^2-Hr)}}{r}, \label{Imax_tr_square}  
\end{align}
where $ G=\sum_i^r \mu_i $, $ H=\sum_i^r \mu_i^2 $, and $r \in \lbrace 1, \cdots, d \rbrace$ is the largest integer s.t. 
\begin{align}
\lambda_i = \dfrac{(r \langle I \rangle_{\text{max}}-G) \mu_i + H - G \langle I \rangle_{\text{max}} }{Hr-G^2} \geq 0 \ \forall \ i  \leq r.
\end{align}
\end{theorem}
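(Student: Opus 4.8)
The plan is to mirror the two-step strategy behind Theorem~\ref{thrm1}: first fix the spectrum of $\rho$ and reduce to its eigenvalues, then solve a finite-dimensional constrained optimization. For the reduction I would use the von Neumann trace inequality (equivalently the generalized Ruhe inequality already invoked for Theorem~\ref{thrm1}): for any $\rho$ with ordered spectrum $\lambda_1\ge\dots\ge\lambda_d$ one has $\mathrm{Tr}(\rho I)\le\sum_i\lambda_i\mu_i$, with equality for $\rho=\sum_i\lambda_i\vert\Psi_i\rangle\langle\Psi_i\vert$. Since the Rényi $2$-purity depends on $\rho$ only through $\mathrm{Tr}(\rho^2)=\sum_i\lambda_i^2=2^{\mathcal{P}_2}/d=:\mathcal{P}$, maximizing $\langle I\rangle$ at fixed $\mathcal{P}_2$ becomes
\begin{align}
\max_{\lambda}\Big\{\,\textstyle\sum_i\mu_i\lambda_i \ \Big|\ \textstyle\sum_i\lambda_i=1,\ \sum_i\lambda_i^2=\mathcal{P},\ \lambda_i\ge0\,\Big\},
\end{align}
and by the rearrangement inequality an optimizer may be taken with $\lambda$ in descending order, so nothing is lost by dropping the ordering constraint; the feasible set is compact, so the maximum is attained.

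For the finite-dimensional problem I would apply the Karush--Kuhn--Tucker conditions. Let $\{1,\dots,r\}$ be the support of an optimizer. Stationarity of $\sum_i\mu_i\lambda_i-\alpha(\sum_i\lambda_i-1)-\beta(\sum_i\lambda_i^2-\mathcal{P})$ on the support gives $\mu_i-\alpha-2\beta\lambda_i=0$, so $\lambda_i$ is \emph{affine in $\mu_i$}: $\lambda_i=a\mu_i+b$. Enforcing the two scalar constraints with $G=\sum_{i=1}^r\mu_i$ and $H=\sum_{i=1}^r\mu_i^2$ yields $b=(1-aG)/r$ and $a^2(Hr-G^2)=\mathcal{P}r-1$; taking the root $a>0$ (the maximizing one, as $Hr-G^2\ge0$ by Cauchy--Schwarz) and substituting into $\langle I\rangle=aH+bG=(G+a(Hr-G^2))/r$ reproduces eq.~(\ref{Imax_tr_square}) after rewriting $(\mathcal{P}r-1)(Hr-G^2)=(1-\tfrac{r}{d}2^{\mathcal{P}_2})(G^2-Hr)$. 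The same substitutions turn $\lambda_i=a\mu_i+b$ into the stated explicit expression, with $a=(r\langle I\rangle_{\mathrm{max}}-G)/(Hr-G^2)$.

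Fixing $r$ is the step I expect to be the main obstacle. The KKT conditions force the support to be the top block of eigenvalues, and complementary slackness for the constraints $\lambda_i\ge0$ requires $\lambda_i\ge0$ for all $i\le r$ together with $a\mu_{r+1}+b\le0$ (equivalently $\mu_{r+1}$ below the multiplier $\alpha=-b/a$), i.e.\ extending the affine formula by one index would give a nonpositive value; one also needs $r\ge1/\mathcal{P}=d/2^{\mathcal{P}_2}$ for $a$ to be real, since a distribution on fewer than $1/\mathcal{P}$ points is too concentrated to have purity $\mathcal{P}$. I would then show that the admissible $r$ (those with $r\ge\lceil1/\mathcal{P}\rceil$ and all $\lambda_i^{(r)}\ge0$) form an interval and that the value~(\ref{Imax_tr_square}) is nondecreasing in $r$ along it, so that the largest admissible $r$ is optimal --- exactly the theorem's characterization. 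Finally, since the reduced problem is the maximization of a linear functional over the convex body $\{\lambda\ge0,\ \sum_i\lambda_i=1,\ \sum_i\lambda_i^2\le\mathcal{P}\}$ (the purity bound may be relaxed to an inequality because shifting weight toward larger $\mu_i$ only raises $\sum_i\lambda_i^2$), the KKT point is the global maximum, unique whenever $Hr-G^2>0$, i.e.\ whenever $\mu_1,\dots,\mu_r$ are not all equal; the fully degenerate case $\mu_1=\dots=\mu_r$ is handled directly.
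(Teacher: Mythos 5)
Your proposal is correct and its computational core --- diagonalizing $\rho$ in the eigenbasis of $I$ via the trace inequality, obtaining $\lambda_i$ affine in $\mu_i$ from stationarity, eliminating the multipliers through $G$ and $H$, and selecting the rank by nonnegativity --- is identical to the paper's; the algebra leading to eq.~(\ref{Imax_tr_square}) and to the explicit $\lambda_i$ checks out, including the sign of the root via $Hr-G^2\geq 0$. Where you genuinely diverge is in how global optimality is certified. The paper first proves (by a mixing-with-$\mathds{1}/d$ contradiction exploiting strict convexity of $\mathrm{Tr}(\rho^2)$) that the state of minimal linear purity at fixed Bell value coincides with the state of maximal Bell value at fixed purity, and then runs Lagrange multipliers on the strictly convex \emph{minimization}, handling the rank by an iterative projection argument starting from $r=d$. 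You instead attack the maximization directly, relax $\sum_i\lambda_i^2=\mathcal{P}$ to an inequality so that the feasible set is convex and the objective linear, and invoke KKT sufficiency. This buys you a cleaner global-optimality statement and, importantly, it is exactly what closes the step you flag as ``the main obstacle'': if the rank-$(r+1)$ stationary point has all $\lambda_i\geq 0$, KKT sufficiency makes it the global maximum over states supported on the top $r+1$ eigenvectors, which contains the rank-$r$ feasible set, so the value is automatically nondecreasing along admissible $r$ and the largest admissible $r$ wins --- you should state this connection explicitly rather than leaving the monotonicity as a separate task. Two small points to tighten: justify the tightness of the inequality relaxation by noting that transferring weight from $\lambda_j$ to $\lambda_i$ with $i<j$ increases both the objective and $\sum_i\lambda_i^2$ (so the purity constraint is active at the optimum whenever $\langle I\rangle_{\text{max}}<\mu_1$), and record that dual feasibility off the support ($\mu_i\leq\alpha$ for $i>r$) is implied by the ordering of the $\mu_i$ once $\lambda_r\geq 0$, so only the nonnegativity check in the theorem statement is actually needed.
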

\begin{proof}
The idea of the proof is to first find the spectrum $ \lbrace \lambda_i \rbrace $ of minimal purity $ \mathcal{P}_2 $ for a fixed expectation value $\langle I \rangle_{\text{max}} $, which we will do by invoking the method of Lagrange multipliers. This spectrum in turn defines the state $\rho$  maximizing the expectation value of the operator $I$ for fixed $\mathcal{P}_2$. The integer $r$ defines the rank of $\rho$. This choice is unique for non-degenerate eigenvalues of $I$. We relocate the specifics of this proof to the Supplemental Material \cite{Supplemental_Material}.
\end{proof}
While Theorem \ref{thrm2} has a more involved structure than Theorem \ref{thrm1}, some cases can still be easily seen. E.g. for pure states the maximum will be $ \langle I \rangle_{\text{max}} = \mu_1 $ while for the maximally mixed state $ \langle I \rangle_{\text{max}} = \mathrm{Tr}(I)/d $. \\
\textit{Equality of quantum resources for two qubits.}\textemdash In the following we show which effect minimizing the purity has on the other state resources. Due to its simplicity we focus on the generalized robustness.
In other words, we demonstrate the power of Theorem \ref{thrm1} by showing that for the subset of two-qubit correlation inequalities the states of minimal generalized robustness of purity for a fixed violation $v$ also minimize the respective generalized robustnesses of coherence $ C_R(\rho) $, discord $D_R(\rho)$, and entanglement $E_R(\rho)$, which in fact turn out to be equal. This is of particular interest since for every quantum state the hierarchy \cite{Purity_resource}
\begin{align}
\mathcal{P}(\rho) \geq \mathcal{C}(\rho) \geq \mathcal{D}(\rho) \geq \mathcal{E}(\rho), \label{hierarchy}  \end{align}
holds when quantified by the same distance-based \footnote{Note that the hierarchy also holds for measures which are no distance in a strict mathematical sense, like the relative entropy, due to the inclusion relations of the void-sets} measure and coherence is quantified with respect to any product basis. We will in particular choose the product basis that minimizes the coherence of the state $\rho$. This notion of coherence coincides with the notion of symmetric quantum discord with respect to all sub-systems \cite{Purity_resource,PhysRevA.92.022112}. Therefore, we will only summarize the concept of coherence \cite{RevModPhys_coherence} here, for more details about discord, see \cite{Bera2017}. Coherence in general, is a basis-dependent concept and is connected to the ability of a state to be in a superposition of some (fixed) basis states. The void states $\delta$ are called incoherent states. These are diagonal with respect to a fixed basis ${\vert i \rangle}$, i.e,
\begin{align}
\delta = \sum_i p_i \vert i \rangle \langle i \vert, \ p_i \geq 0, \ \sum_i p_i = 1. \end{align}
Note, that our notion of coherence corresponds to a minimization over all states equivalent to $\rho$ under local unitaries. \\
Our result is summarized in the following theorem.
\begin{theorem}
\label{thrm3}
Given a Bell operator of the form 
\begin{align}
I = \sum\limits_{x,y} g_{x,y} \ A_x \otimes B_y,    \label{correlation_inequality}    
\end{align}
with local observables $ A_x = \Vec{a}_x \cdot \Vec{\sigma} $, $ B_y = \Vec{b}_y \cdot \Vec{\sigma} $ where $ \Vec{a}_x, \Vec{b}_y $ are Bloch vectors and $ \Vec{\sigma} $ is the vector containing the Pauli matrices. For a fixed expectation value $\langle I \rangle = L + v$, where $L$ is the local bound and $v > 0$, there exists a two-qubit quantum state $\rho_{\text{opt}}$ which simultaneously minimizes the generalized robustnesses of purity $P_R$, coherence with respect to all product bases $C_R$, and entanglement $E_R$, s.t. $\rho_{\text{opt}}$ has an equal amount of all these resources (up to normalization).
\end{theorem}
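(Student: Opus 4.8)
The plan is to exploit local-unitary invariance to put the Bell operator into Bell-diagonal form, and then to show that minimizing any of the three robustnesses over states reproducing $\langle I\rangle=L+v$ collapses onto one and the same scalar problem — minimizing the largest Bell-basis weight — which is precisely what Theorem~\ref{thrm1} solves. First I would use that each $A_x=\vec a_x\cdot\vec\sigma$ and $B_y=\vec b_y\cdot\vec\sigma$ is traceless, so that $I=\sum_{i,j=1}^{3}T_{ij}\,\sigma_i\otimes\sigma_j$ with $T_{ij}=\sum_{x,y}g_{x,y}(a_x)_i(b_y)_j$; a singular value decomposition $T=O_A\,\mathrm{diag}(t_1,t_2,t_3)\,O_B^{\mathsf T}$ with $O_A,O_B\in SO(3)$ then realizes, through the corresponding qubit rotations, a local unitary mapping $I$ to the Bell-diagonal operator $\tilde I=\sum_{i=1}^{3}t_i\,\sigma_i\otimes\sigma_i$, whose eigenvectors are the four Bell states $\vert B_j\rangle$. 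Since $P_R$, $E_R$ and $C_R$ (minimized over all product bases) are local-unitary invariants while $\mathrm{Tr}(\rho I)$, $L$ and $v$ are unchanged when the same local unitaries also act on $\rho$, it suffices to work with $\tilde I$.

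Next I would introduce the Bell twirl $\Lambda(\cdot)=\tfrac14\sum_{\nu=0}^{3}(\sigma_\nu\otimes\sigma_\nu)(\cdot)(\sigma_\nu\otimes\sigma_\nu)$ with $\sigma_0=\mathds{1}$: it is a convex mixture of local unitaries, it fixes $\tilde I$ (so $\mathrm{Tr}(\Lambda(\rho)\tilde I)=\mathrm{Tr}(\rho\tilde I)$), and it maps every state to its Bell-diagonal part. Since $E_R$ is monotone under local-unitary mixtures, every $\rho$ with $\mathrm{Tr}(\rho\tilde I)=L+v$ obeys $E_R(\rho)\ge E_R(\Lambda(\rho))$, where $\Lambda(\rho)=\sum_j p_j\,\vert B_j\rangle\langle B_j\vert$ has the same Bell value. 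For a Bell-diagonal two-qubit state one verifies — using the same twirl inside the defining decomposition and the fact that such a state is separable iff all $p_j\le\tfrac12$ — that $E_R=\max\{0,2p_{\max}-1\}$, while $P_R=d\,p_{\max}-1=4p_{\max}-1$. Theorem~\ref{thrm1} used in reverse (Lemma~\ref{lemma1}) gives that the minimal largest eigenvalue $\lambda_1$ over all states with $\mathrm{Tr}(\rho\tilde I)=L+v$ is attained by $\rho_{\mathrm{opt}}=\sum_{i=1}^{r}\lambda_i\,\vert B_i\rangle\langle B_i\vert$ built from the top eigenvectors of $\tilde I$, so $\rho_{\mathrm{opt}}$ is Bell-diagonal; and $v>0$ forces $\langle\tilde I\rangle>L$, hence $\rho_{\mathrm{opt}}$ is entangled, hence $\lambda_1>\tfrac12$ and $r\le2$, i.e. $\rho_{\mathrm{opt}}=\lambda_1\,\vert B_1\rangle\langle B_1\vert+(1-\lambda_1)\,\vert B_2\rangle\langle B_2\vert$. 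Because every $\Lambda(\rho)$ has $p_{\max}\ge\lambda_1$, we obtain $E_R(\rho)\ge2\lambda_1-1$ for all admissible $\rho$, with equality at $\rho_{\mathrm{opt}}$; thus $\rho_{\mathrm{opt}}$ minimizes both $P_R$ (Theorem~\ref{thrm1}) and $E_R$.

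It remains to evaluate $C_R(\rho_{\mathrm{opt}})$. Any two orthonormal maximally entangled two-qubit states are local-unitarily equivalent to $\{\vert\Phi^+\rangle,\vert\Phi^-\rangle\}$, so a further local unitary — acting also on $\tilde I$, which is now irrelevant — brings $\rho_{\mathrm{opt}}$ to $\tfrac12(\vert00\rangle\langle00\vert+\vert11\rangle\langle11\vert)+(\lambda_1-\tfrac12)(\vert00\rangle\langle11\vert+\vert11\rangle\langle00\vert)$. In the computational product basis this has a single off-diagonal pair of modulus $\lambda_1-\tfrac12$, and a $2\times2$ positivity estimate gives robustness of coherence $2(\lambda_1-\tfrac12)=2\lambda_1-1$ in that basis, so $C_R(\rho_{\mathrm{opt}})\le2\lambda_1-1$. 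The void-set inclusions $\{\mathds{1}/d\}\subseteq\mathcal V_C\subseteq\mathcal V_E$ yield $C_R\ge E_R$ pointwise, hence $C_R(\rho_{\mathrm{opt}})=2\lambda_1-1=E_R(\rho_{\mathrm{opt}})$, and moreover $\min_\rho C_R\ge\min_\rho E_R=2\lambda_1-1=C_R(\rho_{\mathrm{opt}})$ shows that $\rho_{\mathrm{opt}}$ minimizes $C_R$ as well. Therefore $\rho_{\mathrm{opt}}$ simultaneously minimizes $P_R$, $C_R$ and $E_R$, with $P_R(\rho_{\mathrm{opt}})=4\lambda_1-1$ and $C_R(\rho_{\mathrm{opt}})=E_R(\rho_{\mathrm{opt}})=2\lambda_1-1$ — all three determined by the single parameter $\lambda_1$ and hence equal up to normalization.

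The main obstacle is conceptual rather than computational: $P_R$, $C_R$ and $E_R$ are genuinely different functionals that do \emph{not} coincide even on Bell-diagonal states, so the statement can hold only because a single state minimizes all of them. The crux is the twirl step together with Theorem~\ref{thrm1} — recognizing that, after symmetrization, minimizing $E_R$ (and $P_R$) over states with the prescribed Bell value is the minimization of the largest Bell-basis weight already carried out there, so the minimizers coincide. A secondary technical point is the evaluation of $C_R$ at the optimizer, which needs the local-unitary normalization of the two relevant Bell states followed by the squeeze between the explicit product-basis upper bound and the entanglement lower bound.
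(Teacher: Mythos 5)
Your proposal is correct and follows essentially the same route as the paper: reduce to a Bell-diagonal operator, use Lemma~\ref{lemma1} to get a rank-$2$ Bell-diagonal minimizer with $\lambda_1>1/2$, identify $E_R=2\lambda_1-1$ and $P_R=d\lambda_1-1$ as monotone functions of $\lambda_1$, and squeeze $C_R$ between an explicit product-basis upper bound and the hierarchy lower bound $C_R\geq E_R$. The only presentational differences are that you make the reduction to Bell-diagonal states explicit via the twirl (the paper cites this), and you handle the coherence step by local-unitarily normalizing the two relevant Bell states to $\lbrace\vert\Phi^+\rangle,\vert\Phi^-\rangle\rbrace$ and computing directly, whereas the paper exhibits the closest separable state and enumerates the incoherent product bases for all six rank-$2$ combinations.
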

\begin{proof}
The proof relies on the fact that the states of minimal entanglement are Bell-diagonal states (BDS), which are entangled iff $\lambda_1 > 1/2$ since the entanglement of BDS is a monotonic function of $\lambda_1$ only. Using this fact and Lemma \ref{lemma1} (see the Supplemental Material \cite{Supplemental_Material}) the optimal state $\rho_{\text{opt}}$ can always be chosen to be of at most rank 2. This enables us to show that the closest separable state is always incoherent in some product basis. Therefore minimizing $\lambda_1$ minimizes all state resources. We relocated the specifics of the proof to the Supplemental Material \cite{Supplemental_Material}
\end{proof} 
Note that an equivalence between coherence and entanglement for maximally correlated states has also been shown in different contexts \cite{Zhu2017,PhysRevLett.115.020403}.
We want to highlight that there is a straightforward generalization to genuine-multipartite entanglement (GME) quantification for $N-$qubit GHZ-diagonal Bell operators (e.g. two-setting full-correlation inequalities \cite{Scarani2001}) when we ask for a violation $v$ which requires GME \cite{PhysRevD.35.3066}, since the optimal states will then be diagonal in the GHZ basis and the GME of these states is completely characterized by $ \lambda_1>1/2 $ \cite{Cianciaruso2016}, analogously to two-qubit BDS. \\
However, in general the hierarchy (\ref{hierarchy}) will not be tight. Based on numerical optimization we find that there is indeed a (nontrivial) gap between purity, coherence, and entanglement for judiciously chosen observables in the $ I3322 $ inequality \cite{Collins2004}, an inequality with three settings and two-outcomes for both parties including single party correlators. See the Supplemental material \cite{Supplemental_Material} for more details.\\
\textit{CHSH inequality.}\textemdash  Remarkably, our results bring even new insights for the well-known CHSH inequality . The CHSH operator \cite{PhysRevLett.23.880} is defined as
\begin{align}
I = A_1 \otimes B_1 + A_1 \otimes B_2 + A_2 \otimes B_1 - A_2 \otimes B_2,     
\end{align}
with $ \vert  \langle I \rangle \vert \leq 2 $ for local-realistic models. The general form of eq. (\ref{max_violation}) can now be reduced to at most rank-$2$ solutions
\begin{align}
\lambda_1 \mu_1 + (1-\lambda_1) \mu_2 = L+v, \label{result_1}
\end{align}
which recovers the finding made in \cite{PhysRevA.65.024304}. Furthermore it is well known \cite{Landau1987} that if the observables fulfil $ A_i^2 = B_j^2 = \mathds{1} $, it holds
\begin{align}
I^2 = 4 \mathds{1} \otimes \mathds{1} - [A_1,A_2] \otimes [B_1,B_2],
\end{align}
where $ [X,Y] $ denotes the commutator between $X$ and $Y$. The observables $X$ and $Y$ describing projective measurements are called incompatible, i.e. they cannot be measured jointly iff $ [X,Y] \neq 0 $. This quantum effect is the central aspect of the famous Heisenberg-Robertson uncertainty relation \cite{PhysRev.34.163}. The use of incompatible measurements is necessary but not sufficient for Bell nonlocality \cite{PhysRevA.97.012129,Bene2018}. There exists a resource theory \cite{Incomop_resource} which allows the quantification of measurement incompatibility of one party.
Let us introduce as a quantifier $C$ for the (global) incompatibility the product of the single party measurement incompatibilities defined by the operator norm (largest absolute eigenvalue) of the commutators. Namely,
$ C = C_A C_B = \vert \vert [A_1,A_2] \vert \vert \cdot \vert \vert [B_1,B_2] \vert \vert $. This is well motivated since $C=0$ iff one of the parties holds compatible measurements, i.e. the CHSH inequality cannot be violated and $C=4$ is achieved with Pauli commutation relations only.
After some algebra, we obtain the eigenvalues of $ I $ as a function of $ C $, i.e. \cite{PhysRevLett.68.3259,Kiukas2010}
\begin{align}
\mu_{1/4} = \pm \sqrt{4+C},\ \mu_{2/3} = \pm \sqrt{4-C}. \label{CHSH_eigenvalues}
\end{align}
This shows that the quantity $C$ quantifies the maximal nonlocality which can possibly be revealed by the given observables. By introducing the global measurement incompatibility we can study relations between the necessary resources contained in the states and those contained in the measurements, when wanting to achieve a certain nonlocality. The maximal possible violation given in eq. (\ref{max_violation}) reduces to \begin{align}
\langle I \rangle_{\text{max}} = \sqrt{4+C}\lambda_1 + \sqrt{4-C} (1-\lambda_1).    \label{CHSH_violation_max}
\end{align}
Note that after inserting the optimal incompatibility $C_{\text{max}} = \dfrac{4(2\lambda_1 -1)}{2\lambda_1^2-2\lambda_1 +1} $ to maximize the Bell value $ \langle I \rangle_{\text{max}} $ for fixed $\lambda_1$ one easily recovers the special case \cite{Batle2011} and noteworthy the result \cite{PhysRevLett.89.170401} where a formula for the maximal CHSH value of a two-qubit state in terms of its concurrence was found. \\
Intuitively one would expect now for a fixed violation of the CHSH inequality, that there is a trade-off between the necessary measurement resources and the necessary state resources in the sense, that more of the resource in the measurements require less resource in the state. This, however, is not always the case. As one can see in Fig. \ref{CHSH_min_ent_vs_C} there are parameter-regions where less resources on the measurement side go together with less resources on the state side.  \begin{figure}[h!]
\includegraphics[scale =0.35]{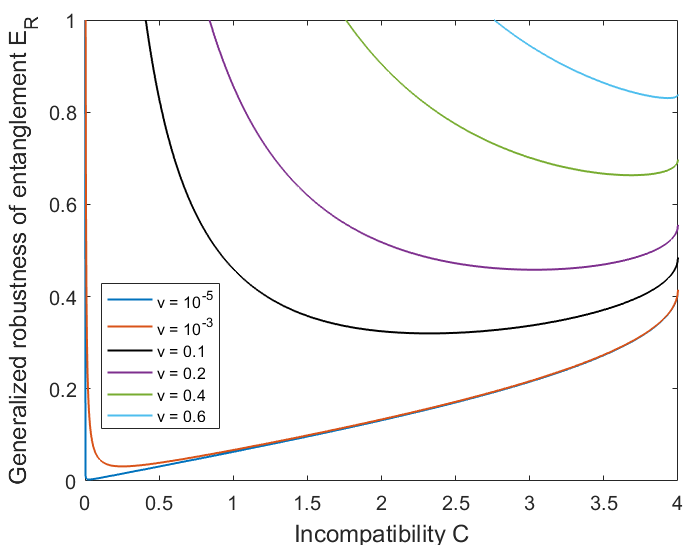}
 \caption{The minimal generalized robustness of entanglement $ E_R(\rho) $ for a given level of incompatibility $C$ for different amounts of desired violation $v$. The curves diverge at some $C$ because there is no state achieving the given violation. For low violations the effect that less entanglement for lower $C$ is necessary becomes clearly visible, for a large regime of $C$.}
 \label{CHSH_min_ent_vs_C}
\end{figure}
Especially for very small violations, weakly incompatible measurements require much less entanglement for the same amount of nonlocality.
Is the surprising behavior discussed above a generic feature, or does it possibly depend on the chosen quantifiers for measurement incompatibility and/or the resources in the state? We discuss other possible quantifiers for the incompatibility and the state resources in the Supplemental Material \cite{Supplemental_Material} and conclude that the behaviour is generic, by arguing that when other quantifiers are chosen only purity could possibly show a qualitatively different behavior. We show that this is indeed the case for the relative entropy of purity, while the Rényi $2$-purity shows a similar behaviour as the generalized robustness of purity as a function of $C$.\\
We strengthen this conclusion by highlighting that plots of the same qualitative behavior follow for the two-setting linear steering inequality \cite{PhysRevA.80.032112} given by
\begin{align}
F_2 = \vert \sum\limits_{i=1}^2 \langle A_i \otimes B_i \rangle \vert \leq \sqrt{2},
\end{align}
where Bobs measurements have to be aligned orthonormally while Alice is free to choose any projective measurements. In this case, the eigenvalues of the steering operator of $ F_2 $ only depend on $ C_A =  \vert \vert [A_1,A_2] \vert \vert $ in an analogous way to the CHSH inequality, i.e.
\begin{align}
\tilde{\mu}_{1/4} = \pm \sqrt{2+C_A},\ \tilde{\mu}_{2/3} = \pm \sqrt{2-C_A},
\end{align}
from which a behavior of the resources that is analogous to that for the CHSH inequality follows. This shows that the qualitative dependency of the state resources on the measurement incompatibility is not just due to our definition of the bipartite quantifier $C$, but a true physical phenomenon. \\
\textit{Discussion.}\textemdash In the present Letter we have shown that the minimal purity necessary to achieve a certain Bell value for the most general Bell operator can be found analytically via an easily accessible criterion. Since the purity of a state is its most fundamental resource which bounds \textit{all} other resources of this state, this has major consequences for the inference of other necessary resources like coherence and entanglement. We demonstrated this concretely by showing that the generalized robustness of all state resources can be minimized by the same state for two-qubit correlation inequalities. Finally, we have connected the nonlocality of quantum correlations, the incompatibility of quantum measurements and the state's resources via the CHSH inequality. This revealed the counterintuitive effect, that sometimes more state resources are required to reach the same level of nonlocality, when the measurement resources are increased. While the CHSH inequality is by far the most studied Bell inequality, this behaviour has, to the best of our knowledge, not been reported so far. The same effect is also prevalent for a steering inequality and thus excludes the existence of any possible conservation law for the necessary resources in states and measurements, regarding steering. \\ 
Our work leads to some immediate directions of future research. First, except for the case of purity which is now solved completely, one should investigate more general Bell scenarios. Second, one could investigate further important resource measures. Finally, one should further investigate how the spectrum of Bell operators depends on the physical properties of the used measurement operators. \\

We thank Lennart Bittel, Federico Grasselli, Martin Kliesch, and Gláucia Murta for helpful discussions.
This research was partially supported
by the EU H2020 QuantERA ERA-NET Cofund in
Quantum Technologies project QuICHE, and by the Federal Ministry of Education and Research (BMBF).

\bibliography{bibliography_2.bib}

\begin{thebibliography}{52}%
\makeatletter
\providecommand \@ifxundefined [1]{%
 \@ifx{#1\undefined}
}%
\providecommand \@ifnum [1]{%
 \ifnum #1\expandafter \@firstoftwo
 \else \expandafter \@secondoftwo
 \fi
}%
\providecommand \@ifx [1]{%
 \ifx #1\expandafter \@firstoftwo
 \else \expandafter \@secondoftwo
 \fi
}%
\providecommand \natexlab [1]{#1}%
\providecommand \enquote  [1]{``#1''}%
\providecommand \bibnamefont  [1]{#1}%
\providecommand \bibfnamefont [1]{#1}%
\providecommand \citenamefont [1]{#1}%
\providecommand \href@noop [0]{\@secondoftwo}%
\providecommand \href [0]{\begingroup \@sanitize@url \@href}%
\providecommand \@href[1]{\@@startlink{#1}\@@href}%
\providecommand \@@href[1]{\endgroup#1\@@endlink}%
\providecommand \@sanitize@url [0]{\catcode `\\12\catcode `\$12\catcode
  `\&12\catcode `\#12\catcode `\^12\catcode `\_12\catcode `\%12\relax}%
\providecommand \@@startlink[1]{}%
\providecommand \@@endlink[0]{}%
\providecommand \url  [0]{\begingroup\@sanitize@url \@url }%
\providecommand \@url [1]{\endgroup\@href {#1}{\urlprefix }}%
\providecommand \urlprefix  [0]{URL }%
\providecommand \Eprint [0]{\href }%
\providecommand \doibase [0]{http://dx.doi.org/}%
\providecommand \selectlanguage [0]{\@gobble}%
\providecommand \bibinfo  [0]{\@secondoftwo}%
\providecommand \bibfield  [0]{\@secondoftwo}%
\providecommand \translation [1]{[#1]}%
\providecommand \BibitemOpen [0]{}%
\providecommand \bibitemStop [0]{}%
\providecommand \bibitemNoStop [0]{.\EOS\space}%
\providecommand \EOS [0]{\spacefactor3000\relax}%
\providecommand \BibitemShut  [1]{\csname bibitem#1\endcsname}%
\let\auto@bib@innerbib\@empty
\bibitem [{\citenamefont {Brunner}\ \emph {et~al.}(2014)\citenamefont
  {Brunner}, \citenamefont {Cavalcanti}, \citenamefont {Pironio}, \citenamefont
  {Scarani},\ and\ \citenamefont {Wehner}}]{Nonlocality_review}%
  \BibitemOpen
  \bibfield  {author} {\bibinfo {author} {\bibfnamefont {N.}~\bibnamefont
  {Brunner}}, \bibinfo {author} {\bibfnamefont {D.}~\bibnamefont {Cavalcanti}},
  \bibinfo {author} {\bibfnamefont {S.}~\bibnamefont {Pironio}}, \bibinfo
  {author} {\bibfnamefont {V.}~\bibnamefont {Scarani}}, \ and\ \bibinfo
  {author} {\bibfnamefont {S.}~\bibnamefont {Wehner}},\ }\href {\doibase
  10.1103/RevModPhys.86.419} {\bibfield  {journal} {\bibinfo  {journal} {Rev.
  Mod. Phys.}\ }\textbf {\bibinfo {volume} {86}},\ \bibinfo {pages} {419}
  (\bibinfo {year} {2014})}\BibitemShut {NoStop}%
\bibitem [{\citenamefont {Bell}(1964)}]{Bell_seminal}%
  \BibitemOpen
  \bibfield  {author} {\bibinfo {author} {\bibfnamefont {J.~S.}\ \bibnamefont
  {Bell}},\ }\href {\doibase 10.1103/PhysicsPhysiqueFizika.1.195} {\bibfield
  {journal} {\bibinfo  {journal} {Physics Physique Fizika}\ }\textbf {\bibinfo
  {volume} {1}},\ \bibinfo {pages} {195} (\bibinfo {year} {1964})}\BibitemShut
  {NoStop}%
\bibitem [{\citenamefont {Einstein}\ \emph {et~al.}(1935)\citenamefont
  {Einstein}, \citenamefont {Podolsky},\ and\ \citenamefont
  {Rosen}}]{EPR_paper}%
  \BibitemOpen
  \bibfield  {author} {\bibinfo {author} {\bibfnamefont {A.}~\bibnamefont
  {Einstein}}, \bibinfo {author} {\bibfnamefont {B.}~\bibnamefont {Podolsky}},
  \ and\ \bibinfo {author} {\bibfnamefont {N.}~\bibnamefont {Rosen}},\ }\href
  {\doibase 10.1103/PhysRev.47.777} {\bibfield  {journal} {\bibinfo  {journal}
  {Phys. Rev.}\ }\textbf {\bibinfo {volume} {47}},\ \bibinfo {pages} {777}
  (\bibinfo {year} {1935})}\BibitemShut {NoStop}%
\bibitem [{\citenamefont {de~Vicente}(2014)}]{Nonlocality_resource}%
  \BibitemOpen
  \bibfield  {author} {\bibinfo {author} {\bibfnamefont {J.~I.}\ \bibnamefont
  {de~Vicente}},\ }\href {\doibase 10.1088/1751-8113/47/42/424017} {\bibfield
  {journal} {\bibinfo  {journal} {Journal of Physics A: Mathematical and
  Theoretical}\ }\textbf {\bibinfo {volume} {47}},\ \bibinfo {pages} {424017}
  (\bibinfo {year} {2014})}\BibitemShut {NoStop}%
\bibitem [{\citenamefont {Streltsov}\ \emph {et~al.}(2018)\citenamefont
  {Streltsov}, \citenamefont {Kampermann}, \citenamefont {W\"{o}lk},
  \citenamefont {Gessner},\ and\ \citenamefont {Bru{\ss}}}]{Purity_resource}%
  \BibitemOpen
  \bibfield  {author} {\bibinfo {author} {\bibfnamefont {A.}~\bibnamefont
  {Streltsov}}, \bibinfo {author} {\bibfnamefont {H.}~\bibnamefont
  {Kampermann}}, \bibinfo {author} {\bibfnamefont {S.}~\bibnamefont
  {W\"{o}lk}}, \bibinfo {author} {\bibfnamefont {M.}~\bibnamefont {Gessner}}, \
  and\ \bibinfo {author} {\bibfnamefont {D.}~\bibnamefont {Bru{\ss}}},\ }\href
  {\doibase 10.1088/1367-2630/aac484} {\bibfield  {journal} {\bibinfo
  {journal} {New Journal of Physics}\ }\textbf {\bibinfo {volume} {20}},\
  \bibinfo {pages} {053058} (\bibinfo {year} {2018})}\BibitemShut {NoStop}%
\bibitem [{\citenamefont {Aberg}(2006)}]{Coherence_Aberg}%
  \BibitemOpen
  \bibfield  {author} {\bibinfo {author} {\bibfnamefont {J.}~\bibnamefont
  {Aberg}},\ }\href@noop {} {\enquote {\bibinfo {title} {Quantifying
  superposition},}\ } (\bibinfo {year} {2006}),\ \Eprint
  {http://arxiv.org/abs/arXiv:quant-ph/0612146} {arXiv:quant-ph/0612146}
  \BibitemShut {NoStop}%
\bibitem [{\citenamefont {Streltsov}\ \emph {et~al.}(2017)\citenamefont
  {Streltsov}, \citenamefont {Adesso},\ and\ \citenamefont
  {Plenio}}]{RevModPhys_coherence}%
  \BibitemOpen
  \bibfield  {author} {\bibinfo {author} {\bibfnamefont {A.}~\bibnamefont
  {Streltsov}}, \bibinfo {author} {\bibfnamefont {G.}~\bibnamefont {Adesso}}, \
  and\ \bibinfo {author} {\bibfnamefont {M.~B.}\ \bibnamefont {Plenio}},\
  }\href {\doibase 10.1103/RevModPhys.89.041003} {\bibfield  {journal}
  {\bibinfo  {journal} {Rev. Mod. Phys.}\ }\textbf {\bibinfo {volume} {89}},\
  \bibinfo {pages} {041003} (\bibinfo {year} {2017})}\BibitemShut {NoStop}%
\bibitem [{\citenamefont {Ollivier}\ and\ \citenamefont
  {Zurek}(2001)}]{Discord}%
  \BibitemOpen
  \bibfield  {author} {\bibinfo {author} {\bibfnamefont {H.}~\bibnamefont
  {Ollivier}}\ and\ \bibinfo {author} {\bibfnamefont {W.~H.}\ \bibnamefont
  {Zurek}},\ }\href {\doibase 10.1103/PhysRevLett.88.017901} {\bibfield
  {journal} {\bibinfo  {journal} {Phys. Rev. Lett.}\ }\textbf {\bibinfo
  {volume} {88}},\ \bibinfo {pages} {017901} (\bibinfo {year}
  {2001})}\BibitemShut {NoStop}%
\bibitem [{\citenamefont {Horodecki}\ \emph {et~al.}(2009)\citenamefont
  {Horodecki}, \citenamefont {Horodecki}, \citenamefont {Horodecki},\ and\
  \citenamefont {Horodecki}}]{RevModPhys.81.865}%
  \BibitemOpen
  \bibfield  {author} {\bibinfo {author} {\bibfnamefont {R.}~\bibnamefont
  {Horodecki}}, \bibinfo {author} {\bibfnamefont {P.}~\bibnamefont
  {Horodecki}}, \bibinfo {author} {\bibfnamefont {M.}~\bibnamefont
  {Horodecki}}, \ and\ \bibinfo {author} {\bibfnamefont {K.}~\bibnamefont
  {Horodecki}},\ }\href {\doibase 10.1103/RevModPhys.81.865} {\bibfield
  {journal} {\bibinfo  {journal} {Rev. Mod. Phys.}\ }\textbf {\bibinfo {volume}
  {81}},\ \bibinfo {pages} {865} (\bibinfo {year} {2009})}\BibitemShut
  {NoStop}%
\bibitem [{\citenamefont {Heinosaari}\ \emph {et~al.}(2015)\citenamefont
  {Heinosaari}, \citenamefont {Kiukas},\ and\ \citenamefont
  {Reitzner}}]{Incomop_resource}%
  \BibitemOpen
  \bibfield  {author} {\bibinfo {author} {\bibfnamefont {T.}~\bibnamefont
  {Heinosaari}}, \bibinfo {author} {\bibfnamefont {J.}~\bibnamefont {Kiukas}},
  \ and\ \bibinfo {author} {\bibfnamefont {D.}~\bibnamefont {Reitzner}},\
  }\href {\doibase 10.1103/PhysRevA.92.022115} {\bibfield  {journal} {\bibinfo
  {journal} {Phys. Rev. A}\ }\textbf {\bibinfo {volume} {92}},\ \bibinfo
  {pages} {022115} (\bibinfo {year} {2015})}\BibitemShut {NoStop}%
\bibitem [{\citenamefont {Gisin}(1991)}]{GISIN_pure}%
  \BibitemOpen
  \bibfield  {author} {\bibinfo {author} {\bibfnamefont {N.}~\bibnamefont
  {Gisin}},\ }\href {\doibase https://doi.org/10.1016/0375-9601(91)90805-I}
  {\bibfield  {journal} {\bibinfo  {journal} {Physics Letters A}\ }\textbf
  {\bibinfo {volume} {154}},\ \bibinfo {pages} {201 } (\bibinfo {year}
  {1991})}\BibitemShut {NoStop}%
\bibitem [{\citenamefont {Werner}(1989)}]{PhysRevA.40.4277}%
  \BibitemOpen
  \bibfield  {author} {\bibinfo {author} {\bibfnamefont {R.~F.}\ \bibnamefont
  {Werner}},\ }\href {\doibase 10.1103/PhysRevA.40.4277} {\bibfield  {journal}
  {\bibinfo  {journal} {Phys. Rev. A}\ }\textbf {\bibinfo {volume} {40}},\
  \bibinfo {pages} {4277} (\bibinfo {year} {1989})}\BibitemShut {NoStop}%
\bibitem [{\citenamefont {Luo}(2008)}]{PhysRevA.77.042303}%
  \BibitemOpen
  \bibfield  {author} {\bibinfo {author} {\bibfnamefont {S.}~\bibnamefont
  {Luo}},\ }\href {\doibase 10.1103/PhysRevA.77.042303} {\bibfield  {journal}
  {\bibinfo  {journal} {Phys. Rev. A}\ }\textbf {\bibinfo {volume} {77}},\
  \bibinfo {pages} {042303} (\bibinfo {year} {2008})}\BibitemShut {NoStop}%
\bibitem [{\citenamefont {Modi}\ \emph {et~al.}(2010)\citenamefont {Modi},
  \citenamefont {Paterek}, \citenamefont {Son}, \citenamefont {Vedral},\ and\
  \citenamefont {Williamson}}]{PhysRevLett.104.080501}%
  \BibitemOpen
  \bibfield  {author} {\bibinfo {author} {\bibfnamefont {K.}~\bibnamefont
  {Modi}}, \bibinfo {author} {\bibfnamefont {T.}~\bibnamefont {Paterek}},
  \bibinfo {author} {\bibfnamefont {W.}~\bibnamefont {Son}}, \bibinfo {author}
  {\bibfnamefont {V.}~\bibnamefont {Vedral}}, \ and\ \bibinfo {author}
  {\bibfnamefont {M.}~\bibnamefont {Williamson}},\ }\href {\doibase
  10.1103/PhysRevLett.104.080501} {\bibfield  {journal} {\bibinfo  {journal}
  {Phys. Rev. Lett.}\ }\textbf {\bibinfo {volume} {104}},\ \bibinfo {pages}
  {080501} (\bibinfo {year} {2010})}\BibitemShut {NoStop}%
\bibitem [{\citenamefont {Zhu}\ \emph {et~al.}(2017)\citenamefont {Zhu},
  \citenamefont {Hayashi},\ and\ \citenamefont {Chen}}]{Zhu2017}%
  \BibitemOpen
  \bibfield  {author} {\bibinfo {author} {\bibfnamefont {H.}~\bibnamefont
  {Zhu}}, \bibinfo {author} {\bibfnamefont {M.}~\bibnamefont {Hayashi}}, \ and\
  \bibinfo {author} {\bibfnamefont {L.}~\bibnamefont {Chen}},\ }\href {\doibase
  10.1088/1751-8121/aa8ffc} {\bibfield  {journal} {\bibinfo  {journal} {Journal
  of Physics A: Mathematical and Theoretical}\ }\textbf {\bibinfo {volume}
  {50}},\ \bibinfo {pages} {475303} (\bibinfo {year} {2017})}\BibitemShut
  {NoStop}%
\bibitem [{\citenamefont {Streltsov}\ \emph {et~al.}(2015)\citenamefont
  {Streltsov}, \citenamefont {Singh}, \citenamefont {Dhar}, \citenamefont
  {Bera},\ and\ \citenamefont {Adesso}}]{PhysRevLett.115.020403}%
  \BibitemOpen
  \bibfield  {author} {\bibinfo {author} {\bibfnamefont {A.}~\bibnamefont
  {Streltsov}}, \bibinfo {author} {\bibfnamefont {U.}~\bibnamefont {Singh}},
  \bibinfo {author} {\bibfnamefont {H.~S.}\ \bibnamefont {Dhar}}, \bibinfo
  {author} {\bibfnamefont {M.~N.}\ \bibnamefont {Bera}}, \ and\ \bibinfo
  {author} {\bibfnamefont {G.}~\bibnamefont {Adesso}},\ }\href {\doibase
  10.1103/PhysRevLett.115.020403} {\bibfield  {journal} {\bibinfo  {journal}
  {Phys. Rev. Lett.}\ }\textbf {\bibinfo {volume} {115}},\ \bibinfo {pages}
  {020403} (\bibinfo {year} {2015})}\BibitemShut {NoStop}%
\bibitem [{\citenamefont {Bromley}\ \emph {et~al.}(2015)\citenamefont
  {Bromley}, \citenamefont {Cianciaruso},\ and\ \citenamefont
  {Adesso}}]{PhysRevLett.114.210401}%
  \BibitemOpen
  \bibfield  {author} {\bibinfo {author} {\bibfnamefont {T.~R.}\ \bibnamefont
  {Bromley}}, \bibinfo {author} {\bibfnamefont {M.}~\bibnamefont
  {Cianciaruso}}, \ and\ \bibinfo {author} {\bibfnamefont {G.}~\bibnamefont
  {Adesso}},\ }\href {\doibase 10.1103/PhysRevLett.114.210401} {\bibfield
  {journal} {\bibinfo  {journal} {Phys. Rev. Lett.}\ }\textbf {\bibinfo
  {volume} {114}},\ \bibinfo {pages} {210401} (\bibinfo {year}
  {2015})}\BibitemShut {NoStop}%
\bibitem [{\citenamefont {Spee}(2020)}]{PhysRevA.102.012420}%
  \BibitemOpen
  \bibfield  {author} {\bibinfo {author} {\bibfnamefont {C.}~\bibnamefont
  {Spee}},\ }\href {\doibase 10.1103/PhysRevA.102.012420} {\bibfield  {journal}
  {\bibinfo  {journal} {Phys. Rev. A}\ }\textbf {\bibinfo {volume} {102}},\
  \bibinfo {pages} {012420} (\bibinfo {year} {2020})}\BibitemShut {NoStop}%
\bibitem [{\citenamefont {Hirsch}\ \emph {et~al.}(2018)\citenamefont {Hirsch},
  \citenamefont {Quintino},\ and\ \citenamefont
  {Brunner}}]{PhysRevA.97.012129}%
  \BibitemOpen
  \bibfield  {author} {\bibinfo {author} {\bibfnamefont {F.}~\bibnamefont
  {Hirsch}}, \bibinfo {author} {\bibfnamefont {M.~T.}\ \bibnamefont
  {Quintino}}, \ and\ \bibinfo {author} {\bibfnamefont {N.}~\bibnamefont
  {Brunner}},\ }\href {\doibase 10.1103/PhysRevA.97.012129} {\bibfield
  {journal} {\bibinfo  {journal} {Phys. Rev. A}\ }\textbf {\bibinfo {volume}
  {97}},\ \bibinfo {pages} {012129} (\bibinfo {year} {2018})}\BibitemShut
  {NoStop}%
\bibitem [{\citenamefont {Bene}\ and\ \citenamefont
  {V{\'{e}}rtesi}(2018)}]{Bene2018}%
  \BibitemOpen
  \bibfield  {author} {\bibinfo {author} {\bibfnamefont {E.}~\bibnamefont
  {Bene}}\ and\ \bibinfo {author} {\bibfnamefont {T.}~\bibnamefont
  {V{\'{e}}rtesi}},\ }\href {\doibase 10.1088/1367-2630/aa9ca3} {\bibfield
  {journal} {\bibinfo  {journal} {New Journal of Physics}\ }\textbf {\bibinfo
  {volume} {20}},\ \bibinfo {pages} {013021} (\bibinfo {year}
  {2018})}\BibitemShut {NoStop}%
\bibitem [{\citenamefont {Cavalcanti}\ and\ \citenamefont
  {Skrzypczyk}(2016)}]{PhysRevA.93.052112}%
  \BibitemOpen
  \bibfield  {author} {\bibinfo {author} {\bibfnamefont {D.}~\bibnamefont
  {Cavalcanti}}\ and\ \bibinfo {author} {\bibfnamefont {P.}~\bibnamefont
  {Skrzypczyk}},\ }\href {\doibase 10.1103/PhysRevA.93.052112} {\bibfield
  {journal} {\bibinfo  {journal} {Phys. Rev. A}\ }\textbf {\bibinfo {volume}
  {93}},\ \bibinfo {pages} {052112} (\bibinfo {year} {2016})}\BibitemShut
  {NoStop}%
\bibitem [{\citenamefont {Clauser}\ \emph {et~al.}(1969)\citenamefont
  {Clauser}, \citenamefont {Horne}, \citenamefont {Shimony},\ and\
  \citenamefont {Holt}}]{PhysRevLett.23.880}%
  \BibitemOpen
  \bibfield  {author} {\bibinfo {author} {\bibfnamefont {J.~F.}\ \bibnamefont
  {Clauser}}, \bibinfo {author} {\bibfnamefont {M.~A.}\ \bibnamefont {Horne}},
  \bibinfo {author} {\bibfnamefont {A.}~\bibnamefont {Shimony}}, \ and\
  \bibinfo {author} {\bibfnamefont {R.~A.}\ \bibnamefont {Holt}},\ }\href
  {\doibase 10.1103/PhysRevLett.23.880} {\bibfield  {journal} {\bibinfo
  {journal} {Phys. Rev. Lett.}\ }\textbf {\bibinfo {volume} {23}},\ \bibinfo
  {pages} {880} (\bibinfo {year} {1969})}\BibitemShut {NoStop}%
\bibitem [{\citenamefont {Cavalcanti}\ \emph {et~al.}(2009)\citenamefont
  {Cavalcanti}, \citenamefont {Jones}, \citenamefont {Wiseman},\ and\
  \citenamefont {Reid}}]{PhysRevA.80.032112}%
  \BibitemOpen
  \bibfield  {author} {\bibinfo {author} {\bibfnamefont {E.~G.}\ \bibnamefont
  {Cavalcanti}}, \bibinfo {author} {\bibfnamefont {S.~J.}\ \bibnamefont
  {Jones}}, \bibinfo {author} {\bibfnamefont {H.~M.}\ \bibnamefont {Wiseman}},
  \ and\ \bibinfo {author} {\bibfnamefont {M.~D.}\ \bibnamefont {Reid}},\
  }\href {\doibase 10.1103/PhysRevA.80.032112} {\bibfield  {journal} {\bibinfo
  {journal} {Phys. Rev. A}\ }\textbf {\bibinfo {volume} {80}},\ \bibinfo
  {pages} {032112} (\bibinfo {year} {2009})}\BibitemShut {NoStop}%
\bibitem [{\citenamefont {Audenaert}\ and\ \citenamefont
  {Plenio}(2006)}]{Audenaert2006}%
  \BibitemOpen
  \bibfield  {author} {\bibinfo {author} {\bibfnamefont {K.~M.~R.}\
  \bibnamefont {Audenaert}}\ and\ \bibinfo {author} {\bibfnamefont {M.~B.}\
  \bibnamefont {Plenio}},\ }\href {\doibase 10.1088/1367-2630/8/11/266}
  {\bibfield  {journal} {\bibinfo  {journal} {New Journal of Physics}\ }\textbf
  {\bibinfo {volume} {8}},\ \bibinfo {pages} {266} (\bibinfo {year}
  {2006})}\BibitemShut {NoStop}%
\bibitem [{\citenamefont {Eisert}\ \emph {et~al.}(2007)\citenamefont {Eisert},
  \citenamefont {Brand{\~{a}}o},\ and\ \citenamefont {Audenaert}}]{Eisert2007}%
  \BibitemOpen
  \bibfield  {author} {\bibinfo {author} {\bibfnamefont {J.}~\bibnamefont
  {Eisert}}, \bibinfo {author} {\bibfnamefont {F.~G. S.~L.}\ \bibnamefont
  {Brand{\~{a}}o}}, \ and\ \bibinfo {author} {\bibfnamefont {K.~M.~R.}\
  \bibnamefont {Audenaert}},\ }\href {\doibase 10.1088/1367-2630/9/3/046}
  {\bibfield  {journal} {\bibinfo  {journal} {New Journal of Physics}\ }\textbf
  {\bibinfo {volume} {9}},\ \bibinfo {pages} {46} (\bibinfo {year}
  {2007})}\BibitemShut {NoStop}%
\bibitem [{\citenamefont {G\"uhne}\ \emph {et~al.}(2007)\citenamefont
  {G\"uhne}, \citenamefont {Reimpell},\ and\ \citenamefont
  {Werner}}]{PhysRevLett.98.110502}%
  \BibitemOpen
  \bibfield  {author} {\bibinfo {author} {\bibfnamefont {O.}~\bibnamefont
  {G\"uhne}}, \bibinfo {author} {\bibfnamefont {M.}~\bibnamefont {Reimpell}}, \
  and\ \bibinfo {author} {\bibfnamefont {R.~F.}\ \bibnamefont {Werner}},\
  }\href {\doibase 10.1103/PhysRevLett.98.110502} {\bibfield  {journal}
  {\bibinfo  {journal} {Phys. Rev. Lett.}\ }\textbf {\bibinfo {volume} {98}},\
  \bibinfo {pages} {110502} (\bibinfo {year} {2007})}\BibitemShut {NoStop}%
\bibitem [{\citenamefont {Cavalcanti}\ and\ \citenamefont
  {Cunha}(2006)}]{Cavalcanti2006}%
  \BibitemOpen
  \bibfield  {author} {\bibinfo {author} {\bibfnamefont {D.}~\bibnamefont
  {Cavalcanti}}\ and\ \bibinfo {author} {\bibfnamefont {M.~O.~T.}\ \bibnamefont
  {Cunha}},\ }\href {\doibase 10.1063/1.2337535} {\bibfield  {journal}
  {\bibinfo  {journal} {Applied Physics Letters}\ }\textbf {\bibinfo {volume}
  {89}},\ \bibinfo {pages} {084102} (\bibinfo {year} {2006})}\BibitemShut
  {NoStop}%
\bibitem [{\citenamefont {Horodecki}\ \emph {et~al.}(1999)\citenamefont
  {Horodecki}, \citenamefont {Horodecki},\ and\ \citenamefont
  {Horodecki}}]{PhysRevA.59.1799}%
  \BibitemOpen
  \bibfield  {author} {\bibinfo {author} {\bibfnamefont {R.}~\bibnamefont
  {Horodecki}}, \bibinfo {author} {\bibfnamefont {M.}~\bibnamefont
  {Horodecki}}, \ and\ \bibinfo {author} {\bibfnamefont {P.}~\bibnamefont
  {Horodecki}},\ }\href {\doibase 10.1103/PhysRevA.59.1799} {\bibfield
  {journal} {\bibinfo  {journal} {Phys. Rev. A}\ }\textbf {\bibinfo {volume}
  {59}},\ \bibinfo {pages} {1799} (\bibinfo {year} {1999})}\BibitemShut
  {NoStop}%
\bibitem [{\citenamefont {Batle}\ and\ \citenamefont
  {Casas}(2011)}]{Batle2011}%
  \BibitemOpen
  \bibfield  {author} {\bibinfo {author} {\bibfnamefont {J.}~\bibnamefont
  {Batle}}\ and\ \bibinfo {author} {\bibfnamefont {M.}~\bibnamefont {Casas}},\
  }\href {\doibase 10.1088/1751-8113/44/44/445304} {\bibfield  {journal}
  {\bibinfo  {journal} {Journal of Physics A: Mathematical and Theoretical}\
  }\textbf {\bibinfo {volume} {44}},\ \bibinfo {pages} {445304} (\bibinfo
  {year} {2011})}\BibitemShut {NoStop}%
\bibitem [{\citenamefont {Batle}\ \emph {et~al.}(2002)\citenamefont {Batle},
  \citenamefont {Casas}, \citenamefont {Plastino},\ and\ \citenamefont
  {Plastino}}]{PhysRevA.65.024304}%
  \BibitemOpen
  \bibfield  {author} {\bibinfo {author} {\bibfnamefont {J.}~\bibnamefont
  {Batle}}, \bibinfo {author} {\bibfnamefont {M.}~\bibnamefont {Casas}},
  \bibinfo {author} {\bibfnamefont {A.~R.}\ \bibnamefont {Plastino}}, \ and\
  \bibinfo {author} {\bibfnamefont {A.}~\bibnamefont {Plastino}},\ }\href
  {\doibase 10.1103/PhysRevA.65.024304} {\bibfield  {journal} {\bibinfo
  {journal} {Phys. Rev. A}\ }\textbf {\bibinfo {volume} {65}},\ \bibinfo
  {pages} {024304} (\bibinfo {year} {2002})}\BibitemShut {NoStop}%
\bibitem [{\citenamefont {Batle}\ \emph {et~al.}(2001)\citenamefont {Batle},
  \citenamefont {Casas}, \citenamefont {Plastino},\ and\ \citenamefont
  {Plastino}}]{Batle2001}%
  \BibitemOpen
  \bibfield  {author} {\bibinfo {author} {\bibfnamefont {J.}~\bibnamefont
  {Batle}}, \bibinfo {author} {\bibfnamefont {M.}~\bibnamefont {Casas}},
  \bibinfo {author} {\bibfnamefont {A.~R.}\ \bibnamefont {Plastino}}, \ and\
  \bibinfo {author} {\bibfnamefont {A.}~\bibnamefont {Plastino}},\ }\href
  {\doibase 10.1088/0305-4470/34/33/309} {\bibfield  {journal} {\bibinfo
  {journal} {Journal of Physics A: Mathematical and General}\ }\textbf
  {\bibinfo {volume} {34}},\ \bibinfo {pages} {6443} (\bibinfo {year}
  {2001})}\BibitemShut {NoStop}%
\bibitem [{\citenamefont {Bartkiewicz}\ \emph {et~al.}(2013)\citenamefont
  {Bartkiewicz}, \citenamefont {Horst}, \citenamefont {Lemr},\ and\
  \citenamefont {Miranowicz}}]{PhysRevA.88.052105}%
  \BibitemOpen
  \bibfield  {author} {\bibinfo {author} {\bibfnamefont {K.}~\bibnamefont
  {Bartkiewicz}}, \bibinfo {author} {\bibfnamefont {B.}~\bibnamefont {Horst}},
  \bibinfo {author} {\bibfnamefont {K.}~\bibnamefont {Lemr}}, \ and\ \bibinfo
  {author} {\bibfnamefont {A.}~\bibnamefont {Miranowicz}},\ }\href {\doibase
  10.1103/PhysRevA.88.052105} {\bibfield  {journal} {\bibinfo  {journal} {Phys.
  Rev. A}\ }\textbf {\bibinfo {volume} {88}},\ \bibinfo {pages} {052105}
  (\bibinfo {year} {2013})}\BibitemShut {NoStop}%
\bibitem [{\citenamefont {Verstraete}\ and\ \citenamefont
  {Wolf}(2002)}]{PhysRevLett.89.170401}%
  \BibitemOpen
  \bibfield  {author} {\bibinfo {author} {\bibfnamefont {F.}~\bibnamefont
  {Verstraete}}\ and\ \bibinfo {author} {\bibfnamefont {M.~M.}\ \bibnamefont
  {Wolf}},\ }\href {\doibase 10.1103/PhysRevLett.89.170401} {\bibfield
  {journal} {\bibinfo  {journal} {Phys. Rev. Lett.}\ }\textbf {\bibinfo
  {volume} {89}},\ \bibinfo {pages} {170401} (\bibinfo {year}
  {2002})}\BibitemShut {NoStop}%
\bibitem [{\citenamefont {Chitambar}\ and\ \citenamefont
  {Gour}(2019)}]{RevModPhys.91.025001}%
  \BibitemOpen
  \bibfield  {author} {\bibinfo {author} {\bibfnamefont {E.}~\bibnamefont
  {Chitambar}}\ and\ \bibinfo {author} {\bibfnamefont {G.}~\bibnamefont
  {Gour}},\ }\href {\doibase 10.1103/RevModPhys.91.025001} {\bibfield
  {journal} {\bibinfo  {journal} {Rev. Mod. Phys.}\ }\textbf {\bibinfo {volume}
  {91}},\ \bibinfo {pages} {025001} (\bibinfo {year} {2019})}\BibitemShut
  {NoStop}%
\bibitem [{\citenamefont {Luo}\ and\ \citenamefont {Li}(2019)}]{Luo2019}%
  \BibitemOpen
  \bibfield  {author} {\bibinfo {author} {\bibfnamefont {Y.}~\bibnamefont
  {Luo}}\ and\ \bibinfo {author} {\bibfnamefont {Y.}~\bibnamefont {Li}},\
  }\href {\doibase 10.1140/epjd/e2019-90693-y} {\bibfield  {journal} {\bibinfo
  {journal} {The European Physical Journal D}\ }\textbf {\bibinfo {volume}
  {73}} (\bibinfo {year} {2019}),\ 10.1140/epjd/e2019-90693-y}\BibitemShut
  {NoStop}%
\bibitem [{\citenamefont {Marshall}\ \emph {et~al.}(2011)\citenamefont
  {Marshall}, \citenamefont {Olkin},\ and\ \citenamefont
  {Arnold}}]{Marshall2011}%
  \BibitemOpen
  \bibfield  {author} {\bibinfo {author} {\bibfnamefont {A.~W.}\ \bibnamefont
  {Marshall}}, \bibinfo {author} {\bibfnamefont {I.}~\bibnamefont {Olkin}}, \
  and\ \bibinfo {author} {\bibfnamefont {B.~C.}\ \bibnamefont {Arnold}},\
  }\href {\doibase 10.1007/978-0-387-68276-1} {\emph {\bibinfo {title}
  {Inequalities: Theory of Majorization and Its Applications}}}\ (\bibinfo
  {publisher} {Springer New York},\ \bibinfo {year} {2011})\BibitemShut
  {NoStop}%
\bibitem [{Sup()}]{Supplemental_Material}%
  \BibitemOpen
  \href@noop {} {}\bibinfo {note} {See Supplemental Material at [URL will be
  inserted by publisher] for more details on derivations, as well as for more
  details on the hierarchy of quantum resources in the I3322 scenario and a
  discussion about the incompatibility quantifier of $C$. This includes Ref.
  \cite{boyd_vandenberghe_2004,PhysRevLett.78.2275,Zhu2010}}\BibitemShut
  {NoStop}%
\bibitem [{Note1()}]{Note1}%
  \BibitemOpen
  \bibinfo {note} {Note that the hierarchy also holds for measures which are no
  distance in a strict mathematical sense, like the relative entropy, due to
  the inclusion relations of the void-sets}\BibitemShut {NoStop}%
\bibitem [{\citenamefont {Yao}\ \emph {et~al.}(2015)\citenamefont {Yao},
  \citenamefont {Xiao}, \citenamefont {Ge},\ and\ \citenamefont
  {Sun}}]{PhysRevA.92.022112}%
  \BibitemOpen
  \bibfield  {author} {\bibinfo {author} {\bibfnamefont {Y.}~\bibnamefont
  {Yao}}, \bibinfo {author} {\bibfnamefont {X.}~\bibnamefont {Xiao}}, \bibinfo
  {author} {\bibfnamefont {L.}~\bibnamefont {Ge}}, \ and\ \bibinfo {author}
  {\bibfnamefont {C.~P.}\ \bibnamefont {Sun}},\ }\href {\doibase
  10.1103/PhysRevA.92.022112} {\bibfield  {journal} {\bibinfo  {journal} {Phys.
  Rev. A}\ }\textbf {\bibinfo {volume} {92}},\ \bibinfo {pages} {022112}
  (\bibinfo {year} {2015})}\BibitemShut {NoStop}%
\bibitem [{\citenamefont {Bera}\ \emph {et~al.}(2017)\citenamefont {Bera},
  \citenamefont {Das}, \citenamefont {Sadhukhan}, \citenamefont {Roy},
  \citenamefont {Sen(De)},\ and\ \citenamefont {Sen}}]{Bera2017}%
  \BibitemOpen
  \bibfield  {author} {\bibinfo {author} {\bibfnamefont {A.}~\bibnamefont
  {Bera}}, \bibinfo {author} {\bibfnamefont {T.}~\bibnamefont {Das}}, \bibinfo
  {author} {\bibfnamefont {D.}~\bibnamefont {Sadhukhan}}, \bibinfo {author}
  {\bibfnamefont {S.~S.}\ \bibnamefont {Roy}}, \bibinfo {author} {\bibfnamefont
  {A.}~\bibnamefont {Sen(De)}}, \ and\ \bibinfo {author} {\bibfnamefont
  {U.}~\bibnamefont {Sen}},\ }\href {\doibase 10.1088/1361-6633/aa872f}
  {\bibfield  {journal} {\bibinfo  {journal} {Reports on Progress in Physics}\
  }\textbf {\bibinfo {volume} {81}},\ \bibinfo {pages} {024001} (\bibinfo
  {year} {2017})}\BibitemShut {NoStop}%
\bibitem [{\citenamefont {Scarani}\ and\ \citenamefont
  {Gisin}(2001)}]{Scarani2001}%
  \BibitemOpen
  \bibfield  {author} {\bibinfo {author} {\bibfnamefont {V.}~\bibnamefont
  {Scarani}}\ and\ \bibinfo {author} {\bibfnamefont {N.}~\bibnamefont
  {Gisin}},\ }\href {\doibase 10.1088/0305-4470/34/30/314} {\bibfield
  {journal} {\bibinfo  {journal} {Journal of Physics A: Mathematical and
  General}\ }\textbf {\bibinfo {volume} {34}},\ \bibinfo {pages} {6043}
  (\bibinfo {year} {2001})}\BibitemShut {NoStop}%
\bibitem [{\citenamefont {Svetlichny}(1987)}]{PhysRevD.35.3066}%
  \BibitemOpen
  \bibfield  {author} {\bibinfo {author} {\bibfnamefont {G.}~\bibnamefont
  {Svetlichny}},\ }\href {\doibase 10.1103/PhysRevD.35.3066} {\bibfield
  {journal} {\bibinfo  {journal} {Phys. Rev. D}\ }\textbf {\bibinfo {volume}
  {35}},\ \bibinfo {pages} {3066} (\bibinfo {year} {1987})}\BibitemShut
  {NoStop}%
\bibitem [{\citenamefont {Cianciaruso}\ \emph {et~al.}(2016)\citenamefont
  {Cianciaruso}, \citenamefont {Bromley},\ and\ \citenamefont
  {Adesso}}]{Cianciaruso2016}%
  \BibitemOpen
  \bibfield  {author} {\bibinfo {author} {\bibfnamefont {M.}~\bibnamefont
  {Cianciaruso}}, \bibinfo {author} {\bibfnamefont {T.~R.}\ \bibnamefont
  {Bromley}}, \ and\ \bibinfo {author} {\bibfnamefont {G.}~\bibnamefont
  {Adesso}},\ }\href {\doibase 10.1038/npjqi.2016.30} {\bibfield  {journal}
  {\bibinfo  {journal} {npj Quantum Information}\ }\textbf {\bibinfo {volume}
  {2}} (\bibinfo {year} {2016}),\ 10.1038/npjqi.2016.30}\BibitemShut {NoStop}%
\bibitem [{\citenamefont {Collins}\ and\ \citenamefont
  {Gisin}(2004)}]{Collins2004}%
  \BibitemOpen
  \bibfield  {author} {\bibinfo {author} {\bibfnamefont {D.}~\bibnamefont
  {Collins}}\ and\ \bibinfo {author} {\bibfnamefont {N.}~\bibnamefont
  {Gisin}},\ }\href {\doibase 10.1088/0305-4470/37/5/021} {\bibfield  {journal}
  {\bibinfo  {journal} {Journal of Physics A: Mathematical and General}\
  }\textbf {\bibinfo {volume} {37}},\ \bibinfo {pages} {1775} (\bibinfo {year}
  {2004})}\BibitemShut {NoStop}%
\bibitem [{\citenamefont {Landau}(1987)}]{Landau1987}%
  \BibitemOpen
  \bibfield  {author} {\bibinfo {author} {\bibfnamefont {L.~J.}\ \bibnamefont
  {Landau}},\ }\href {\doibase 10.1016/0375-9601(87)90075-2} {\bibfield
  {journal} {\bibinfo  {journal} {Physics Letters A}\ }\textbf {\bibinfo
  {volume} {120}},\ \bibinfo {pages} {54} (\bibinfo {year} {1987})}\BibitemShut
  {NoStop}%
\bibitem [{\citenamefont {Robertson}(1929)}]{PhysRev.34.163}%
  \BibitemOpen
  \bibfield  {author} {\bibinfo {author} {\bibfnamefont {H.~P.}\ \bibnamefont
  {Robertson}},\ }\href {\doibase 10.1103/PhysRev.34.163} {\bibfield  {journal}
  {\bibinfo  {journal} {Phys. Rev.}\ }\textbf {\bibinfo {volume} {34}},\
  \bibinfo {pages} {163} (\bibinfo {year} {1929})}\BibitemShut {NoStop}%
\bibitem [{\citenamefont {Braunstein}\ \emph {et~al.}(1992)\citenamefont
  {Braunstein}, \citenamefont {Mann},\ and\ \citenamefont
  {Revzen}}]{PhysRevLett.68.3259}%
  \BibitemOpen
  \bibfield  {author} {\bibinfo {author} {\bibfnamefont {S.~L.}\ \bibnamefont
  {Braunstein}}, \bibinfo {author} {\bibfnamefont {A.}~\bibnamefont {Mann}}, \
  and\ \bibinfo {author} {\bibfnamefont {M.}~\bibnamefont {Revzen}},\ }\href
  {\doibase 10.1103/PhysRevLett.68.3259} {\bibfield  {journal} {\bibinfo
  {journal} {Phys. Rev. Lett.}\ }\textbf {\bibinfo {volume} {68}},\ \bibinfo
  {pages} {3259} (\bibinfo {year} {1992})}\BibitemShut {NoStop}%
\bibitem [{\citenamefont {Kiukas}\ and\ \citenamefont
  {Werner}(2010)}]{Kiukas2010}%
  \BibitemOpen
  \bibfield  {author} {\bibinfo {author} {\bibfnamefont {J.}~\bibnamefont
  {Kiukas}}\ and\ \bibinfo {author} {\bibfnamefont {R.~F.}\ \bibnamefont
  {Werner}},\ }\href {\doibase 10.1063/1.3447736} {\bibfield  {journal}
  {\bibinfo  {journal} {Journal of Mathematical Physics}\ }\textbf {\bibinfo
  {volume} {51}},\ \bibinfo {pages} {072105} (\bibinfo {year}
  {2010})}\BibitemShut {NoStop}%
\bibitem [{\citenamefont {Boyd}\ and\ \citenamefont
  {Vandenberghe}(2004)}]{boyd_vandenberghe_2004}%
  \BibitemOpen
  \bibfield  {author} {\bibinfo {author} {\bibfnamefont {S.}~\bibnamefont
  {Boyd}}\ and\ \bibinfo {author} {\bibfnamefont {L.}~\bibnamefont
  {Vandenberghe}},\ }\href {\doibase 10.1017/CBO9780511804441} {\emph {\bibinfo
  {title} {Convex Optimization}}}\ (\bibinfo  {publisher} {Cambridge University
  Press},\ \bibinfo {year} {2004})\BibitemShut {NoStop}%
\bibitem [{\citenamefont {Vedral}\ \emph {et~al.}(1997)\citenamefont {Vedral},
  \citenamefont {Plenio}, \citenamefont {Rippin},\ and\ \citenamefont
  {Knight}}]{PhysRevLett.78.2275}%
  \BibitemOpen
  \bibfield  {author} {\bibinfo {author} {\bibfnamefont {V.}~\bibnamefont
  {Vedral}}, \bibinfo {author} {\bibfnamefont {M.~B.}\ \bibnamefont {Plenio}},
  \bibinfo {author} {\bibfnamefont {M.~A.}\ \bibnamefont {Rippin}}, \ and\
  \bibinfo {author} {\bibfnamefont {P.~L.}\ \bibnamefont {Knight}},\ }\href
  {\doibase 10.1103/PhysRevLett.78.2275} {\bibfield  {journal} {\bibinfo
  {journal} {Phys. Rev. Lett.}\ }\textbf {\bibinfo {volume} {78}},\ \bibinfo
  {pages} {2275} (\bibinfo {year} {1997})}\BibitemShut {NoStop}%
\bibitem [{\citenamefont {Zhu}\ \emph {et~al.}(2010)\citenamefont {Zhu},
  \citenamefont {Chen},\ and\ \citenamefont {Hayashi}}]{Zhu2010}%
  \BibitemOpen
  \bibfield  {author} {\bibinfo {author} {\bibfnamefont {H.}~\bibnamefont
  {Zhu}}, \bibinfo {author} {\bibfnamefont {L.}~\bibnamefont {Chen}}, \ and\
  \bibinfo {author} {\bibfnamefont {M.}~\bibnamefont {Hayashi}},\ }\href
  {\doibase 10.1088/1367-2630/12/8/083002} {\bibfield  {journal} {\bibinfo
  {journal} {New Journal of Physics}\ }\textbf {\bibinfo {volume} {12}},\
  \bibinfo {pages} {083002} (\bibinfo {year} {2010})}\BibitemShut {NoStop}%
\bibitem [{Note2()}]{Note2}%
  \BibitemOpen
  \bibinfo {note} {As $P_R(\protect \mathds {1}/d) = 0$ and $\langle I \rangle
  _{\protect \mathds {1}/d} = \protect \mathrm {Tr}(I)/d$. For expectation
  values $\langle I \rangle < \protect \mathrm {I}/d $ the proof works similar
  by simply exchanging the order of the $\mu _i$ from decreasing to
  ascending.}\BibitemShut {Stop}%
\end{thebibliography}%

\onecolumngrid
\appendix

\section{Supplemental Material for "Quantifying necessary quantum resources for nonlocality"}
\indent \textit{Maximal expectation value for fixed generalized robustness of purity $P_R(\rho)$.} \textemdash Here, we give more detailed derivation of Theorem (\ref{thrm1}) in the main text. 
\begin{theorem1a}
Given the Hermitian operator $ I = \sum_{j=1}^d \mu_j \vert \Psi_j \rangle \langle \Psi_j \vert$ with $ \mu_j\geq\mu_t $ for $j<t$ and a fixed robustness of purity $P_R(\rho)$ of a quantum state $\rho$. The maximal expectation value $\langle I \rangle_{\text{max}} $ can be achieved by  $\rho = \sum_{i=1}^r \lambda_i \vert \Psi_i \rangle \langle \Psi_i \vert $, where $ \lambda_i \geq 0$, $\sum_{i=1}^r \lambda_i = 1 $, $ \lambda_i \geq \lambda_s $ for $ i < s $, and is given by
\begin{align}
\langle I \rangle_{\text{max}} = \sum\limits_{j=1}^r  \mu_j \lambda_j,
\end{align}
where $r$ is an integer s.t. $ \dfrac{1}{r-1} > \lambda_1 \geq \dfrac{1}{r} $ and all eigenvalues $\lambda_i$ for $i \in \lbrace 1, \cdots r-1 \rbrace$ are equal to $\lambda_1 = (1+P_R)/d$.
\end{theorem1a}
\begin{proof}
It holds $ P_R(\rho) = d \lambda_1(\rho) -1 $, i.e. the constraint of fixed purity $ P_R(\rho) $ depends only on the largest eigenvalue of $\rho$ and we have to do the optimization w.r.t. the remaining degrees of freedom. The generalization of Ruhe's trace inequality \cite{Marshall2011} states that if $A,B$ are $d \times d$ Hermitian matrices, then
\begin{align}
\sum\limits_{i=1}^d \eta_i(A) \eta_{d-i+1}(B) \leq \mathrm{Tr}(AB) \leq \sum\limits_{i=1}^{d} \eta_i(A) \eta_i(B) \end{align}
where $\eta_i(A)$ denotes the $i$-th eigenvalue of A and the eigenvalues $\eta_i$ are ordered in descending order. This simply means that in order to maximize (minimize) the expectation value $\langle I \rangle = \mathrm{Tr}(\rho I)$, $\rho$ has to be diagonal in the same basis as $I$. Further, the $i$-th largest eigenvalue of $\rho$ has to be multiplied with the $i$-th largest (smallest) eigenvalue of $I$. Since $\lambda_1$ is by assumption the largest eigenvalue of $\rho$, all other $\lambda_i$ cannot be larger. In order to maximize the magnitude of the expectation value all $\lambda_i$ should be as big as possible, which means equal to $\lambda_1$. This however is only possible for $r-1$ eigenvalues, where $r$ (which describes the rank of $\rho$) is the largest integer such that the $\lbrace \lambda_i \rbrace$ describe a normalized quantum state. The remaining non-zero eigenvalue $\lambda_r$ is given by the normalization constraint which also provides the upper bound to $\lambda_1$. The lower bound comes from the requirement that $\lambda_1$ is the largest eigenvalue, which finishes the proof.
\end{proof}

As an extension, we show in the following Lemma that Theorem (\ref{thrm1}) can also be used to determine the minimal $\lambda_1$ for a fixed expectation value $\langle I \rangle_{max}$.

\begin{Lemma}
\label{lemma1}
Given the Hermitian operator $ I = \sum_{j=1}^d \mu_j \vert \Psi_j \rangle \langle \Psi_j \vert$ with $ \mu_j\geq\mu_t $ for $j<t$. The minimal robustness of purity $P_R(\rho)$ of a quantum state $ \rho = \sum_{i=1}^d \lambda_i \vert \phi_i \rangle \langle \phi_i \vert $, where $ \lambda_i \geq 0$ and $\sum_{i=1}^d \lambda_i = 1 $,  $ \lambda_i \geq \lambda_s $ for $ i < s $, achieving the expectation value $\langle I \rangle_{max} \geq \dfrac{1}{d} \mathrm{Tr}(I)$ is determined by the equation
\begin{align}
\langle I \rangle_{\text{max}} = \sum\limits_{j=1}^r  \mu_j \lambda_j, 
\end{align}
provided that $ \dfrac{1}{r-1} > \lambda_1 \geq \dfrac{1}{r} $, and all eigenvalues $\lambda_i$ for $i \in \lbrace 1, \cdots r-1 \rbrace$ are equal to $\lambda_1$.
\end{Lemma}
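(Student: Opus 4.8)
The plan is to obtain the lemma as the functional inverse of Theorem 1. That theorem says that for a quantum state with fixed largest eigenvalue $\lambda_1$ (equivalently fixed $P_R=d\lambda_1-1$) the best achievable Bell value is $f(\lambda_1):=\sum_{j=1}^{r}\mu_j\lambda_j$, where $r$ is the integer with $\tfrac1r\le\lambda_1<\tfrac1{r-1}$, the eigenvalues are $\lambda_i=\lambda_1$ for $i\le r-1$ and $\lambda_r=1-(r-1)\lambda_1\in(0,\lambda_1]$, and $\rho$ is diagonal in the eigenbasis of $I$ with the top eigenvalues of $\rho$ paired with the top eigenvalues of $I$. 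The entire content of the lemma is that this map $f$ is invertible on the range of interest, so that fixing $\langle I\rangle_{\max}$ pins down $\lambda_1$ and hence the minimal $P_R$.

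First I would record the two endpoints: $f(1/d)=\mathrm{Tr}(I)/d$ (the maximally mixed state, $r=d$) and $f(1)=\mu_1$ (the pure eigenstate $\vert\Psi_1\rangle$, $r=1$). Next I would check that $f$ is continuous and non-decreasing on $[1/d,1]$. On the branch $\lambda_1\in[\tfrac1r,\tfrac1{r-1})$ one has $f(\lambda_1)=\lambda_1\big(\sum_{j=1}^{r-1}\mu_j-(r-1)\mu_r\big)+\mu_r$, which is affine in $\lambda_1$ with slope $\sum_{j=1}^{r-1}(\mu_j-\mu_r)\ge0$ because the $\mu_j$ are ordered in descending order; the slope is strictly positive whenever the top $r$ eigenvalues of $I$ are not all equal, in particular whenever $I$ is non-degenerate. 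Continuity at the breakpoint $\lambda_1=1/r$ follows by evaluating the two adjacent branches: from the $r$-branch all of $\lambda_1,\dots,\lambda_r$ equal $1/r$, and the limit $\lambda_1\to(1/r)^-$ from the $(r{+}1)$-branch gives $\lambda_1=\dots=\lambda_r\to1/r$, $\lambda_{r+1}\to0$, i.e. the same eigenvalue vector and hence the same value of $f$. Consequently the half-open intervals $f\big([\tfrac1r,\tfrac1{r-1})\big)$ tile $[\mathrm{Tr}(I)/d,\mu_1)$, with $\mu_1$ attained at $\lambda_1=1$.

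With monotonicity established the lemma closes in one step. Let $\langle I\rangle_{\max}$ with $\mathrm{Tr}(I)/d\le\langle I\rangle_{\max}\le\mu_1$ be given; the hypothesis $\langle I\rangle_{\max}\ge\mathrm{Tr}(I)/d$ is essential because for smaller target values $f$ already overshoots at $\lambda_1=1/d$ and a different, minimization-type argument (using the lower Ruhe bound) would be needed. By the tiling statement there is a unique pair $(r,\lambda_1^{*})$ with $\tfrac1r\le\lambda_1^{*}<\tfrac1{r-1}$ solving $\langle I\rangle_{\max}=\sum_{j=1}^{r}\mu_j\lambda_j=\lambda_1^{*}\big(\sum_{j=1}^{r-1}\mu_j-(r-1)\mu_r\big)+\mu_r$ with $\lambda_i=\lambda_1^{*}$ for $i\le r-1$. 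For any state $\rho$ with $\mathrm{Tr}(\rho I)=\langle I\rangle_{\max}$, Theorem 1 gives $\langle I\rangle_{\max}\le f(\lambda_1(\rho))$, so monotonicity of $f$ forces $\lambda_1(\rho)\ge\lambda_1^{*}$ and hence $P_R(\rho)=d\lambda_1(\rho)-1\ge d\lambda_1^{*}-1$; conversely the Theorem 1 state built from $\lambda_1^{*}$ attains this value. Thus $P_R^{\min}=d\lambda_1^{*}-1$ and $\lambda_1^{*}$ is fixed by the displayed equation, which is the claim.

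The main obstacle I anticipate is the bookkeeping at the breakpoints and under degeneracy: one must verify that, for a trial integer $r$, solving the linear equation for $\lambda_1$ produces a value in the admissible window $[\tfrac1r,\tfrac1{r-1})$ for exactly one $r$, so that the recipe ``solve the linear equation, then check the window'' is well posed; and when $I$ has degenerate leading eigenvalues $f$ is only weakly increasing, so one should define $\lambda_1^{*}$ as the smallest preimage of $\langle I\rangle_{\max}$ under $f$ — still realized by a Theorem 1 state — which recovers the uniqueness asserted in the non-degenerate case. Everything else is a routine computation.
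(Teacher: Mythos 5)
Your proposal is correct, and it executes the inversion of Theorem~\ref{thrm1} in a genuinely more structured way than the paper does. The paper's own proof of Lemma~\ref{lemma1} re-runs the direct optimization: starting from $\langle I\rangle_{\text{max}}=\sum_{i,j}\mu_i\lambda_j\vert\langle\Psi_i\vert\phi_j\rangle\vert^2$ it argues that any misalignment of the eigenbases, or any choice $\lambda_i<\lambda_1$ for $i\le r-1$, ``wastes'' contribution to the expectation value and therefore forces $\lambda_1$ to be larger than necessary; the rank $r$ is then fixed by the same Ansatz-plus-window check you describe. You instead treat Theorem~\ref{thrm1} as a black box defining the value function $f(\lambda_1)=\sum_{j=1}^{r}\mu_j\lambda_j$, verify that $f$ is piecewise affine with slope $\sum_{j=1}^{r-1}(\mu_j-\mu_r)\ge 0$, continuous at the breakpoints $\lambda_1=1/r$, and spans $[\mathrm{Tr}(I)/d,\mu_1]$, and then obtain the lemma by monotone inversion: any $\rho$ with $\mathrm{Tr}(\rho I)=\langle I\rangle_{\text{max}}$ satisfies $\langle I\rangle_{\text{max}}\le f(\lambda_1(\rho))$, hence $\lambda_1(\rho)\ge\lambda_1^{*}$. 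What your route buys is precisely the rigor the paper leaves implicit: the well-posedness of ``solve the linear equation for a trial $r$, then check the window'' follows from the tiling of $[\mathrm{Tr}(I)/d,\mu_1)$ by the branch images, and your convention of taking the smallest preimage cleanly handles degenerate leading eigenvalues of $I$, where $f$ is only weakly increasing and the paper's uniqueness claim would otherwise need a separate remark. The paper's direct argument is shorter and makes the physical mechanism (lost overlap, unequalized eigenvalues) more visible, but both land on the identical recipe and optimal state.
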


\begin{proof}
Since $ P_R(\rho) = d \lambda_1(\rho) -1 $, we need to minimize $\lambda_1$. The expectation value in general is given by 
\begin{align}
\langle I \rangle_{\text{max}} = \mathrm{Tr}(I \rho) = \sum_{i,j} \mu_i \lambda_j \vert \langle \Psi_i \vert \phi_j \rangle \vert^2. \label{expect_Bell}
\end{align}
We achieve a minimization of $\lambda_1$ by exploiting the following two observations.   
First, since $\langle I \rangle_{max} \geq \dfrac{1}{d} \mathrm{Tr}(I)$ \footnote{As $P_R(\mathds{1}/d) = 0$ and $\langle I \rangle_{\mathds{1}/d} = \mathrm{Tr}(I)/d$. For expectation values $\langle I \rangle < \mathrm{I}/d $ the proof works similar by simply exchanging the order of the $\mu_i$ from decreasing to ascending.} we need to maximize for fixed $j$ the term $ \sum_i \mu_i \lambda_j \vert \langle \Psi_i \vert \phi_j \rangle \vert^2 $, i.e. we need to appropriately choose the eigenbasis of $\rho$ which according to Theorem (\ref{thrm1}) will be done by choosing $\rho$ diagonal in the same basis as $I$, or more specifically $ \vert \phi_j \rangle = \vert \Psi_j \rangle \ \forall \ j$. This can be seen by realizing that $ \sum_i \mu_i \lambda_j \vert \langle \Psi_i \vert \phi_j \rangle \vert^2 $ is upper bounded by $\lambda_j \mu_j$. Hence, if $ \vert \phi_j \rangle \neq \vert \Psi_j \rangle$, $\lambda_1$ has to be larger than necessary because some part of the contribution towards the expectation value is lost due to a sub-optimal basis choice. Second, $\lambda_1$ will be minimal for maximal possible $\lambda_2,\lambda_3,\cdots, \lambda_r$, where $r$ is just an index for now. More precisely, it is optimal to choose as many $\lambda_i$ equal to $\lambda_1$ as possible, since by definition $\lambda_1$ is the maximal eigenvalue and for any lower value of the $\lambda_i$, $\lambda_1$ would again be larger than necessary since we did not choose the maximal contribution of the terms $\lambda_i \mu_i \ \forall \ i$ towards the expectation value. However, we still have to incorporate that $\rho$ is a normalized quantum state, which means not all $\lambda_i$ can actually be equal to $\lambda_1$. In general it possible to choose all $\lambda_i$ equal to $\lambda_1$ for $i \in \lbrace 1, \cdots, r-1 \rbrace$ and to determine the smallest non-zero eigenvalue $\lambda_r$ by normalization. Hence $r$ denotes the rank of $\rho$. This leads to
\begin{align}
\langle I \rangle_{\text{max}} = \sum\limits_{j=1}^r  \mu_j \lambda_j, \label{expect_Bell2}
\end{align}
where we still do not know the value of the rank $r$. However, we can just make an Ansatz for some $r \in \lbrace 1,\cdots d´\rbrace$ and check whether the conditions $ \dfrac{1}{r-1} > \lambda_1 \geq \dfrac{1}{r} $, which are necessary for $\rho$ to be normalized, positive-semidefinite, and for $\lambda_1$ to be the largest eigenvalue of $\rho$ are fulfilled or not.
\end{proof}

\indent \textit{Maximal expectation value for fixed  Rényi $2$-purity.}\textemdash Here, we give a detailed derivation of Theorem (\ref{thrm2}) in the main text.
\begin{theorem2}
Given the Hermitian operator $ I = \sum_{j=1}^d \mu_j \vert \Psi_j \rangle \langle \Psi_j \vert$ with $ \mu_j\geq\mu_t $ for $j<t$ and a fixed Rényi $2$-purity $\mathcal{P}_2(\rho)$ of a quantum state $\rho$. The maximal expectation value $\langle I \rangle_{\text{max}}$ can be achieved by $\rho = \sum_{i=1}^r \lambda_i \vert \Psi_i \rangle \langle \Psi_i \vert $, where $ \lambda_i \geq 0$, $\sum_{i=1}^r \lambda_i = 1 $, $ \lambda_i \geq \lambda_s $ for $ i < s $, and is given by
\begin{align} 
\langle I \rangle_{\text{max}} = \dfrac{G+\sqrt{(1-\dfrac{r}{d} 2^{\mathcal{P}_2} )(G^2-Hr)}}{r},
\end{align}
where $ G=\sum_i^r \mu_i $, $ H=\sum_i^r \mu_i^2 $, and $r \in \lbrace 1, \cdots, d \rbrace$ is the largest integer s.t. 
\begin{align}
\lambda_i = \dfrac{(r \langle I \rangle_{\text{max}}-G) \mu_i + H - G \langle I \rangle_{\text{max}} }{Hr-G^2} \geq 0 \ \forall \ i  \leq r.
\end{align}
\end{theorem2}
\begin{proof}
 We first show that we can solve a different but connected optimization task which will lead to a proof of the theorem. The first simplification will be, that instead of considering the Rényi $2$-purity $ \mathcal{P}_2 $ directly, we can just consider the linear purity $ P = \mathrm{Tr(\rho^2)} $, since the logarithm is a monotonic function of the linear purity. Note that for any given purity $P$ the maximal achievable expectation value is s.t. $\langle I \rangle_{\text{max}} \geq \dfrac{1}{d} \mathrm{Tr}(I)$. This follows from the fact that $\langle I \rangle = \dfrac{1}{d} \mathrm{Tr}(I)$ is the expectation value achieved by the maximally mixed state (which has zero purity) and for any other purity $P >0$ one is able to choose a state $ \Omega = x \vert \Psi_1 \rangle \langle \Psi_1 \vert + (1-x) \dfrac{\mathds{1}}{d} $ with appropriately chosen $x \in [0,1]$ s.t. the purity of $\Omega$ is $P$. Since $ \vert \Psi_1 \rangle \langle \Psi_1 \vert $ is the eigenstate corresponding to the largest eigenvalue of $I$ it follows trivially that $\langle I \rangle_{\Omega} \geq \dfrac{1}{d} \mathrm{Tr}(I)$. This allows us to formulate an alternative optimization problem which proves the theorem. Given the Hermitian operator $I = \sum_j \mu_j \vert \Psi_j \rangle \langle \Psi_j \vert$ with fixed expectation value $\langle I'\rangle$ we want to find the minimal linear purity $P^*$ of a valid quantum state that achieves the expectation value $\langle I'\rangle$. We will now show by contradiction that for the minimal purity $P^*$ it holds $\langle I'\rangle = \langle I \rangle_{\text{max}}$. First, it is trivial that  $\langle I'\rangle > \langle I \rangle_{\text{max}}$ leads to a contradiction since $\langle I \rangle_{\text{max}}$ is by assumption the maximal expectation value for a given purity $P^*$. Second, if $\langle I'\rangle < \langle I \rangle_{\text{max}}$ we could construct a state $ \tilde{\rho} = t \rho_{\text{max}} + (1-t) \dfrac{\mathds{1}}{d} $ where $ \rho_{\text{max}} $ is a state of purity $P^*$ achieving the expectation value $ \langle I \rangle_{\text{max}} $ and choose $t \in (0,1) $ s.t. $\langle I \rangle_{\tilde{\rho}} = \langle I' \rangle$. It follows now for the purity of $\tilde{\rho}$ that 
\begin{align}
P(\tilde{\rho}) = P(t\rho_{\text{max}}+(1-t)\dfrac{\mathds{1}}{d}) < tP(\rho_{\text{max}}) + (1-t)P(\dfrac{\mathds{1}}{d}) \leq P^*,   
\end{align}
where we used the strict convexity of the linear purity (which is the square of the Frobenius norm) and the fact that $ P(\rho_{\text{max}}) = P^* $. This however is a contradiction, since $P^*$ is by assumption the minimal purity for states achieving the expectation value $\langle I'\rangle$ and we showed that $\tilde{\rho}$ would have a smaller purity while achieving the expectation value $\langle I'\rangle$. This allows us to solve the minimization problem for a fixed expectation value and use the optimal state of purity $P$ to solve the problem we consider in the theorem.
With the same argumentation as for the generalized robustness in the main text or alternatively the proof shown in \cite{PhysRevLett.89.170401} (originally in the context of the CHSH inequality) we can reduce the problem of minimizing $P$ under the Bell constraint s.t. the Bell operator $  I = \sum_i \mu_i \vert \Psi_i  \rangle \langle \Psi_i \vert  $ and the optimal quantum state  $\rho_{\text{opt}} = \sum_i \lambda_i \vert \Psi_i  \rangle \langle \Psi_i \vert $ will be diagonal in the same basis. Note that the eigenvalues of both operators are ordered in descending order i.e., $ \lambda_i \geq \lambda_s $ for $ i < s  $ and $ \mu_j \geq \mu_t $ for $ j < t  $. The Lagrangian of the problem is given by
\begin{align}
\mathcal{L}(\lambda_i,\alpha,\beta) = \sum_i \lambda_i^2 - \alpha (\sum_i \lambda_i -1) - \beta (\sum_i \lambda_i \mu_i - \langle I \rangle),
\end{align}
where $ \alpha $ is the multiplier according to the normalization constraint and $ \beta $ the multiplier for the expectation value constraint. We ignored for now the non-negativity of the eigenvalues but will come back to it later. Note that due to the eigenvalue ordering $\langle I \rangle \geq \dfrac{1}{d}\sum\limits_i \mu_i = \dfrac{1}{d} \mathrm{Tr}(I)$, since $\lambda_i \mu_i > \lambda_s \mu_s $ for $i<s$. Obviously this does not represent a loss of generality for our theorem since we are only interested in values $ \langle I \rangle \geq \dfrac{1}{d}\sum\limits_i \mu_i = \dfrac{1}{d} \mathrm{Tr}(I) $ anyway, as shown above. To find an optimum, we have to take the partial derivatives of $ \mathcal{L}(\lambda_i,\alpha,\beta)  $ with respect to the eigenvalues and the Lagrange multiplier. In the case of the multipliers we simply retrieve the constraints, for the eigenvalues we find 
\begin{align}
\dfrac{\partial \mathcal{L}}{\partial \lambda_k} = 2\lambda_k - \alpha - \beta \mu_k \overset{!}{=} 0,\end{align} \label{lagrange_condition}
which results in 
\begin{align}
\lambda_k = \dfrac{1}{2}(\beta \mu_k + \alpha).   \label{Lagrange_opt} 
\end{align}
From the normalization constraint $ \sum_k \lambda_k = 1 $ we get
\begin{align}
\alpha = \dfrac{2-\beta \sum\limits_{k=1}^r \mu_k}{r}, \end{align}
where the sum runs from $ k = 1 $ until the rank $r$, which means we make an Ansatz for a rank $r$ solution of the problem, i.e. all eigenvalues $  \lambda_i $ of the state are zero $ \forall \ i > r $. This method is not restrictive, in the sense that we are still able to find the true optimizer of the problem including the positivity constraints. The purity $ P = \mathrm{Tr}(\rho^2) $ is strictly convex, as it is the square of the Frobenius norm, which means there is unique global minimum. The projection (in the Frobenius norm) of a quasi-state $\eta$ with negative eigenvalues onto the feasible set of proper quantum states $\rho$ will be of smaller rank than $\eta$ itself. Therefore, we can start by making an Ansatz for an optimal state $ \rho_{\text{opt}} $ of full rank $ r = d $. If this is a proper quantum state according to (\ref{Lagrange_opt}), it is the state of minimal purity consistent with the expectation value constraint. If the state is not a proper quantum state, we consider solutions of rank $ r = d-1 $. By iteratively decreasing the rank, we will find a solution $ \rho_{\text{opt}} $ which is a proper quantum state and we do not have to consider states with an even lower rank. In the following, we will introduce the quantities  $ G = \sum_k^r \mu_k $ and $ H = \sum_k^r \mu_k^2 $. Using the Bell value constraint we find
\begin{align}
\beta (Hr-G^2) = 2\langle I \rangle r-2G, \label{beta}
\end{align}
which defines $\beta$ provided that $ Hr-G^2 \neq 0 $. It is easy to see that $ Hr-G^2 = 0 $ iff $\mu_i = \mu_1 \ \forall \ i \leq r$ i.e. the operator has $r$ degenerate largest eigenvalues $\mu_1$. This is a consequence of the Cauchy-Schwarz inequality when we consider the inner product of a vector $\Vec{\mu}$ containing the $r$ eigenvalues $\mu_i$ and a vector $\Vec{\mathds{1}}$ containing only ones. The case $ Hr-G^2 = 0 $ can however only lead to the optimal state $\rho_{\text{opt}}$ iff $\langle I \rangle = \mu_1$ which follows from the requirement that the r.h.s. of eq. (\ref{beta}) also vanishes. In this case the problem is independent of $\beta$ and we find $\lambda_k = \dfrac{1}{r}\ \forall \ k \leq r$.
If $ Hr-G^2 \neq 0 $ we find the optimal state $ \rho_{\text{opt}} $ by taking the eigenvalues from eq. (\ref{Lagrange_opt}) with the corresponding eigenstates from the operator $I$.
Reversely, this spectrum of eigenvalues can now be used to find the maximum expectation value $ \langle I \rangle_{\text{max}} $ for a fixed purity $P$. Multiplying (\ref{Lagrange_opt}) with $ \lambda_k $ and summing over $k$ leads to $P = \dfrac{1}{2}(\beta \langle I \rangle + \alpha)$, where we introduced the linear purity $ P = \mathrm{Tr}(\rho^2) = \sum_{k=1}^r \lambda_k^2 $. By solving for $ \langle I \rangle = \langle I \rangle_{\text{max}} = \sum_{k=1}^r \mu_k \lambda_k $ we will find 
\begin{align}
\langle I \rangle_{\text{max}} = \dfrac{G+\sqrt{(1-Pr)(G^2-Hr)}}{r}.    \label{Tr_2_max_suppl}
\end{align}
From eq. (\ref{Tr_2_max_suppl}) we see that for $P \geq 1/r_{\text{deg}}$ where $r_{\text{deg}}$ is the number of degenerate largest eigenvalues of $I$ the maximum is given by $ \mu_1 $, since $ G^2-Hr_{\text{deg}} = 0 $  which can trivially be checked to be correct. While for $ P \leq 1/r_{\text{deg}} $ for only $r_{\text{deg}}$ times degenerate eigenvalues $\mu_1$ of $I$, we see that $ G^2-Hr \neq 0 $ which shows that eq. (\ref{Tr_2_max_suppl}) is valid for all operators $I$. The theorem follows now by rewriting the purity $P$ in terms of the Rényi $2$-purity $ \mathcal{P}_2 $ and the requirement that all $\lambda_k \ \forall \ k = 1,\cdots,r$ are positive semidefinite (the remaining $d-r$ eigenvalues are zero) and inserting the Lagrange multipliers into the expression (\ref{Lagrange_opt}). Note that the optimal $ r $ for a given purity $ P $ can be found by the above described feasibility check. In other words, we check whether (\ref{Lagrange_opt}) leads to a valid quantum state for the rank $r$, purity $P$ and maximal expectation value $ \langle I \rangle_{\text{max}} $. Note further that in order to derive (\ref{Tr_2_max_suppl}), we have to solve a quadratic equation, which generally has two solutions. However, we are only interested in the maximum of those two solutions which is the solution corresponding to taking the positive root in (\ref{Tr_2_max_suppl}).
\end{proof}
\indent \textit{Equality of quantum resources for two qubits.}\textemdash Here, we give a detailed derivation of Theorem \ref{thrm3} in the main text. 
\begin{theorem3}
Given a Bell operator of the form 
\begin{align}
I = \sum\limits_{x,y} g_{x,y} \ A_x \otimes B_y,    \label{correlation_inequality_supp}    
\end{align}
with local observables $ A_x = \Vec{a}_x \cdot \Vec{\sigma} $, $ B_y = \Vec{b}_y \cdot \Vec{\sigma} $ where $ \Vec{a}_x, \Vec{b}_y $ are Bloch vectors and $ \Vec{\sigma} $ is the vector containing the Pauli matrices. For a fixed expectation value $\langle I \rangle = L + v$, where $L$ is the local bound and $v > 0$, there exists a two-qubit quantum state $\rho_{\text{opt}}$ which simultaneously minimizes the generalized robustnesses of purity $P_R$, coherence with respect to all product bases $C_R$, and entanglement $E_R$, s.t. $\rho_{\text{opt}}$ has an equal amount of all these resources (up to normalization).
\end{theorem3}
\begin{proof}
Let us first note that Bell operators of the form (\ref{correlation_inequality_supp}) are (up to local unitaries) diagonal in the Bell basis \cite{PhysRevLett.89.170401}. This means $ I = \sum_j \mu_j \vert \Psi_j  \rangle \langle \Psi_j \vert $ where the $ \lbrace  \vert \Psi_j  \rangle \rbrace $ are maximally entangled states. As a consequence, the states of minimal entanglement (for all measures) $ \rho_{BDS} = \sum_i \lambda_i \vert \Psi_j  \rangle \langle \Psi_j \vert $ are also diagonal in this Bell basis \cite{PhysRevA.59.1799}. 
 These states are entangled iff $ \lambda_1(\rho_{BDS}) > 1/2 $ and it was shown in \cite{Zhu2010} that  the generalized robustness of entanglement is given by $ E_R(\rho) = 2 \lambda_1(\rho) -1 $. This means minimizing $E_R(\rho)$ is equivalent to minimizing the generalized robustness of purity $P_R(\rho) = d \lambda_1(\rho) -1$, since both are monotonic functions of $\lambda_1(\rho)$. Due to Lemma (\ref{lemma1}) and the fact that $\lambda_1(\rho) > 1/2$, the state $\rho_{opt}$ can always chosen to be of the form  $ \rho_{\text{opt}} = \lambda_1 \vert \Psi_1 \rangle \langle \Psi_1 \vert + (1-\lambda_1) \vert \Psi_2 \rangle \langle \Psi_2 \vert $ where $\lambda_1$ can be determined from eq. (\ref{expect_Bell2}). It is convenient to choose  $ \tau = \vert \Psi_2 \rangle \langle \Psi_2 \vert $ as optimal noisy state in eq.(\ref{gen_robustness_purity}) to minimize the generalized robustness of entanglement. This is always possible since mixing the minimal noise necessary, i.e. $x = 2 \lambda_1-1$ of $\tau$ with $\rho_{\text{opt}}$ will end up in a separable state
 \begin{align}
 \dfrac{\rho_{\text{opt}}+(2\lambda_1-1) \tau}{1+(2 \lambda_1-1)} = \dfrac{1}{2}(\vert \Psi_1 \rangle \langle \Psi_1 \vert + \vert \Psi_2 \rangle \langle \Psi_2 \vert) \eqqcolon \xi. \label{closest_sep}  
 \end{align}
 Since all considered quantifiers are invariant under local unitaries, we fix the Bell operator  w.l.o.g. to be of the form
 \begin{align}
I = &\mu_1 \vert \Phi^+ \rangle \langle \Phi^+ \vert + \mu_2 \vert \Phi^- \rangle \langle \Phi^- \vert \label{Bell_diagonal_I}\\
+&\mu_3 \vert \Psi^+ \rangle \langle \Psi^+ \vert + \mu_4 \vert \Psi^- \rangle \langle \Psi^- \vert, \nonumber 
\end{align}
where $ \vert \Phi^{\pm} \rangle = \dfrac{1}{\sqrt{2}}( \vert 00 \rangle \pm \vert 11 \rangle)$ and $ \vert \Psi^{\pm} \rangle = \dfrac{1}{\sqrt{2}}( \vert 01 \rangle \pm \vert 10 \rangle) $ or any permutation of the eigenvalues. It follows now directly from the form of the closest separable state $\xi$ in eq. (\ref{closest_sep})
that it is also incoherent in some product basis, which means it is also the closest incoherent state to $\rho_{\text{opt}}$. This is because the hierarchy (\ref{hierarchy}) has to hold, which means $C_R(\rho_{\text{opt}}) \geq E_R(\rho_{\text{opt}})$. It can easily be seen that $\xi$ is incoherent for the mixtures of $ \lbrace \vert \Phi^+ \rangle \langle \Phi^+ \vert, \vert \Phi^- \rangle \langle \Phi^- \vert \rbrace $ or $ \lbrace \vert \Psi^+ \rangle \langle \Psi^+ \vert, \vert \Psi^- \rangle \langle \Psi^- \vert \rbrace $ and the computational basis. For the other combinations one finds as optimal bases tensor products of the eigenstates of the Pauli matrices $ \sigma_x $ or $ \sigma_y $.  The eigenstates of the the Pauli matrix $ \sigma_z $ are given by $ \lbrace \vert 0 \rangle, \vert 1 \rangle \rbrace, $ the eigenstates of $ \sigma_x $ by $ \vert \pm \rangle = \dfrac{1}{\sqrt{2}}(\vert 0 \rangle \pm \vert 1 \rangle ) $ and those of $ \sigma_y $ are given by $ \vert R/L \rangle  = \dfrac{1}{\sqrt{2}}( \vert 0 \rangle \pm i \vert 1 \rangle )$. Below we list for any of the possible rank-$2$ combinations the specific decomposition of the closest separable state $\xi$ into these states. Namely, it can easily be verified that
\begin{align}
&\lbrace \vert \Phi^+ \rangle \langle \Phi^+ \vert, \vert \Phi^- \rangle \langle \Phi^- \vert \rbrace \Rightarrow \xi = \dfrac{1}{2}(\vert 00 \rangle \langle 00 \vert + \vert 11 \rangle \langle 11 \vert),  \label{sigma_suppl}  \\
&\lbrace \vert \Phi^+ \rangle \langle \Phi^+ \vert, \vert \Psi^+ \rangle \langle \Psi^+ \vert \rbrace \Rightarrow \xi = \dfrac{1}{2}(\vert ++ \rangle \langle ++ \vert + \vert -- \rangle \langle -- \vert), \nonumber \\
&\lbrace \vert \Phi^+ \rangle \langle \Phi^+ \vert, \vert \Psi^- \rangle \langle \Psi^- \vert \rbrace \Rightarrow \xi = \dfrac{1}{2}(\vert RL \rangle \langle RL \vert + \vert LR \rangle \langle LR \vert), \nonumber \\
&\lbrace \vert \Phi^- \rangle \langle \Phi^- \vert, \vert \Psi^+ \rangle \langle \Psi^+ \vert \rbrace \Rightarrow \xi = \dfrac{1}{2}(\vert RR \rangle \langle RR \vert + \vert LL \rangle \langle LL \vert), \nonumber \\
&\lbrace \vert \Phi^- \rangle \langle \Phi^- \vert, \vert \Psi^- \rangle \langle \Psi^- \vert \rbrace \Rightarrow \xi = \dfrac{1}{2}(\vert -+ \rangle \langle -+ \vert + \vert +- \rangle \langle +- \vert), \nonumber \\
&\lbrace \vert \Psi^+ \rangle \langle \Psi^+ \vert, \vert \Psi^- \rangle \langle \Psi^- \vert \rbrace \Rightarrow \xi = \dfrac{1}{2}(\vert 01 \rangle \langle 01 \vert + \vert 10 \rangle \langle 10 \vert), \nonumber 
\end{align}
all other cases of rank-$2$ BDS are equivalent under local unitaries to one of the above cases.
This finishes the proof.
\end{proof} 
\indent \textit{The hierarchy of quantum resources in the context of the I3322 inequality.}\textemdash Here, we show that the hierarchy 
\begin{align}
P_R \geq C_R \geq D_R \geq E_R,  
\end{align}
is not always tight for the states minimizing the respective generalized robustnesses for a given Bell operator $I$. The I3322 inequality is given by \cite{Collins2004}
\begin{align}
&\langle A_1 \rangle + \langle A_2 \rangle - \langle B_1 \rangle - \langle B_2 \rangle  \\
+&\langle A_1 B_1 \rangle + \langle A_2 B_1 \rangle + \langle A_3 B_1 \rangle + \langle A_1 B_2 \rangle \nonumber \\
+&\langle A_2 B_2 \rangle - \langle A_3 B_2 \rangle + \langle A_1 B_3 \rangle - \langle A_2 B_3 \rangle \leq 4. \nonumber 
\end{align}
By generating random projective measurements $ \lbrace A_{a \vert x} \rbrace $ (and similar for Bob) which lead to the observables $ A_x = A_{2 \vert x} - A_{1 \vert x} $ we searched for Bell operators $ I $, which will lead to a non-tight hierarchy. We found that the following measurements of Alice and Bob do lead to such a case. The measurements (rounded to four digits) are given in the computational basis $ \lbrace \vert 0 \rangle, \vert 1 \rangle \rbrace $ by \begin{align}
A_{1 \vert 1} = \begin{pmatrix}
0.4379 & \phantom{-}0.3455 + 0.3560i \\
\phantom{-}0.3455 - 0.3560i & 0.5621 \\
\end{pmatrix}, \\
A_{1 \vert 2} = \begin{pmatrix}
0.6885 & \phantom{-}0.3964 - 0.2394i \\
\phantom{-}0.3964 + 0.2394i & 0.3115 \\
\end{pmatrix} \nonumber,\\
A_{1 \vert 3} = \begin{pmatrix}
0.9187 & -0.0737 +0.2632i \\
-0.0737 - 0.2632i & 0.0813 \\
\end{pmatrix} \nonumber, \\
B_{1 \vert 1} = \begin{pmatrix}
0.6973 & \phantom{-}0.0630 - 0.4551i \\
\phantom{-}0.0630 + 0.4551i & 0.3027 \\
\end{pmatrix} \nonumber, \\
B_{1 \vert 2} = \begin{pmatrix}
0.8982 & -0.2538 + 0.1645i \\
-0.2538 - 0.1645i & 0.1018 \\
\end{pmatrix} \nonumber, \\
B_{1 \vert 3} = \begin{pmatrix}
0.6472 & -0.0110 + 0.4777i \\
-0.0110 - 0.4777i & 0.3528 \\
\end{pmatrix} \nonumber, 
\end{align}
where the remaining POVM-elements are obtained by the completeness relation $\sum_a A_{a \vert x} = \mathds{1} $ (and similar for Bob). For the generalized robustnesses of purity, coherence, and entanglement we find for these settings and a required Bell value of $ \langle I \rangle = 4.001 $ that,
\begin{align}
P_R > C_R > E_R,   
\end{align}
with $ P_R = 2.6756 $,  $ C_R = 0.8418 $, and $ E_R = 0.8291 $. We calculated the purity robustness analytically with the methods described in the main text. The entanglement robustness was determined by semidefinite programming \cite{boyd_vandenberghe_2004}, and for the coherence robustness we used a combination of a simplex algorithm and semidefinite programming over all Bell operators of the form
\begin{align}
\tilde{I} = (U_A \otimes U_B) I (U_A \otimes U_B)^{\dagger},  
\end{align}
where $ U_A, U_B $ are local unitaries. For the simplex algorithm we used different randomly initialized starting points which all lead to the same result, suggesting it is the true minimum. Note that the gap between purity and coherence is not just due to a trivial factor like for correlation inequalities (see main text) but due to entirely different optimal states. \\
\indent \textit{Discussion on the definition of the quantifier $C$.}\textemdash In the main text we defined the quantifier $ C = \vert \vert [A_1,A_2] \vert \vert \cdot \vert \vert [B_1,B_2] \vert \vert  $ in order to judge the quality of the observables in the CHSH scenario. While the magnitude of a single party's commutator is a valid incompatibility monotone for projective measurements, it is not clear that $ C $ has a similar meaning for the measurements of both parties. As we argued in the main text, $ C $ is meaningful for the CHSH inequality since it determines the eigenvalues of the Bell operator and especially the maximal possible violation enabled by the observables $ \lbrace A_1,A_2,B_1,B_2 \rbrace $. In more general Bell scenarios, the ability to show nonlocality with some measurements is a distinct resource from measurement incompatibility \cite{PhysRevA.97.012129,Bene2018}. However, there exists so far no resource theory or straightforward quantification for the ability of observables to show nonlocality. In Fig. \ref{min_resource_CHSH} we show that indeed, the general resource requirement for higher Bell values increase.
\begin{figure}[h!]
\includegraphics[scale =0.35]{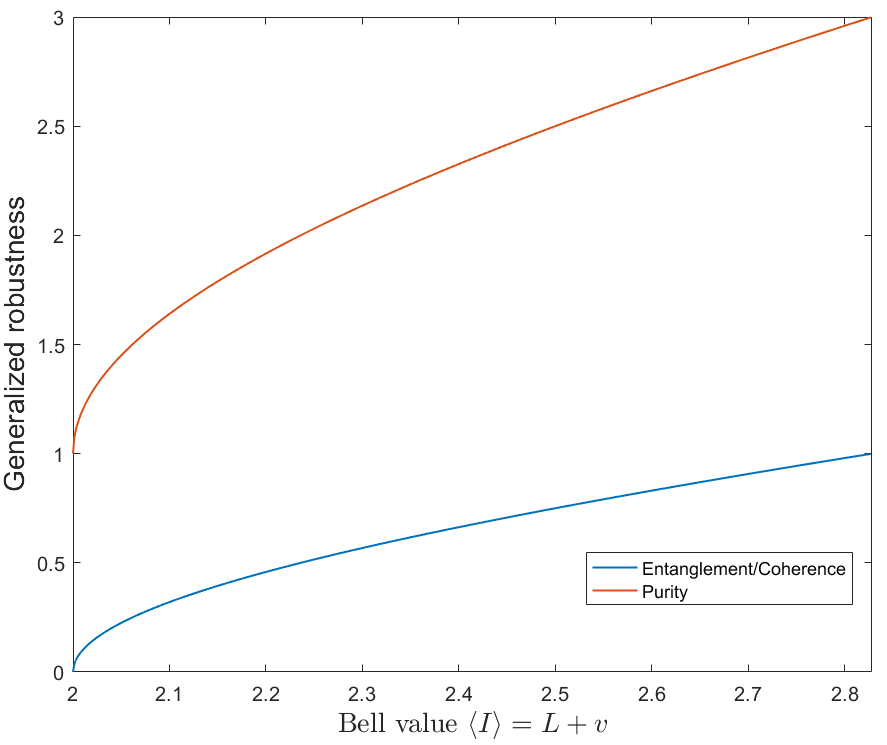}
 \caption{The minimal resources in terms of generalized robustness needed to achieve a fixed Bell value. As stated in the main text, coherence and entanglement are equivalent (and discord as well) while purity is the largest of these resources.}
 \label{min_resource_CHSH}
\end{figure} 
If we consider the sum of the single party's incompatibilities $ \tilde{C} = C_A + C_B = \vert \vert [A_1,A_2] \vert \vert  + \vert \vert [B_1,B_2] \vert \vert $, which due to linearity might be more intuitive to do, it would lead to eigenvalues 
\begin{align}
\mu_{1/4} = \pm \sqrt{4+\dfrac{(C_A+C_B)^2-C_A^2-C_B^2}{2}},\\
\mu_{2/3} = \pm \sqrt{4-\dfrac{(C_A+C_B)^2-C_A^2-C_B^2}{2}}. \nonumber
\end{align}
This however, is a problem since for fixed values of $ \tilde{C} $ there are multiple possible eigenvalue distributions $ \lbrace \mu_i \rbrace $ as one can easily see by fixing $ \tilde{C} = 2 $ and comparing the cases $ C_A = 2, C_B = 0$ and $ C_A = 1, C_B = 1 $. This means there cannot be a well defined function of the quantum resources of the state depending $ \tilde{C} $ in general. Note that in the special case $ C_A = C_B $, the eigenvalues will again be well defined. Finally, we could also be interested in the functional form of the resources depending on $ C_A $ and $C_B$. In this case, the fundamental effect that maximal incompatible measurements for Alice and Bob will not lead to the minimal required state resources will remain. Especially for very small violations $ v = 10^{-3} $, we show in Fig. \ref{CA_CB_incomp} that the behavior is essentially the same. 
\begin{figure}[h!]
\includegraphics[scale =0.35]{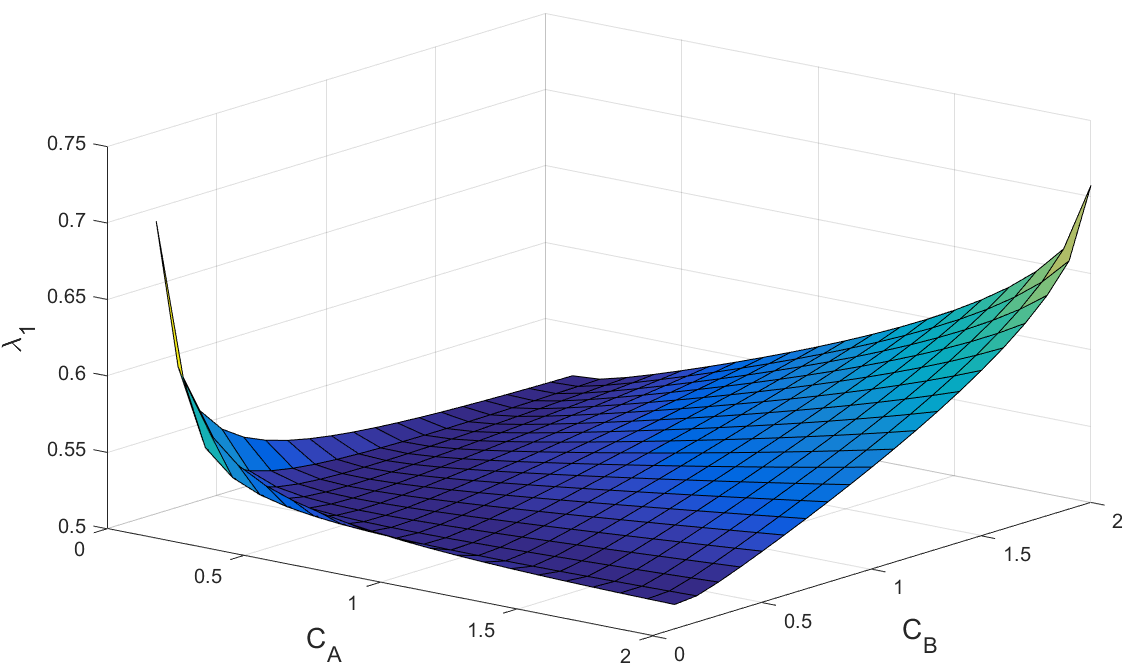}
 \caption{The necessary $ \lambda_1 $ vs. the single party's incompatibility quantifier $ C_A $, $C_B$ for a violation of $ v = 10^{-3} $. The plot shows that higher incompatibility resources need higher entanglement to achieve the same violation for some parameter region of $C_A, C_B$.}
 \label{CA_CB_incomp}
\end{figure} \\
\indent \textit{Variation of resource quantifiers.} Here, we discuss the qualitative influence when we chose a quantifier for the state's resources different than the generalized robustness. It turns out that only purity can show a qualitatively different behaviour as a function of $C$. This is the case for the relative entropy of purity. \\
First, since the entanglement of Bell diagonal states 
\begin{align}
\rho_{BDS} = \sum_i^4 \lambda_i \vert \Psi_j  \rangle \langle \Psi_j \vert, \end{align}
where the $ \vert \Psi_j  \rangle $ span an orthonormal basis of maximally entangled states, is completely characterized by their largest eigenvalue $\lambda_1$ and the resource measures are by definition resource monotones, it does not matter which distance-based entanglement quantifier we choose, since all of them will be monotonic functions of $\lambda_1$. By the same argumentation we can see, that for any rank-$2$ BDS state, the closest separable state $ \xi $ (see eq. (\ref{closest_sep})) does not change when we change the quantifier. Now the proof of Theorem (\ref{thrm3}) guarantees, that also the minimal coherence only depends on $\lambda_1$ and not on the concrete quantifier, since  the closest separable state $ \xi $ is incoherent in some product basis. However, if we chose a different purity measure, it will in general depend on the whole spectrum of the state and not just the largest eigenvalue. This can lead to a potentially different functional behaviour with respect to the incompatibility quantifier $C$. We show that this is indeed the case by considering the relative entropy of purity.
The relative entropy is defined as
\begin{align}
S(\rho \vert \vert \xi) = \min_{\xi \in \mathcal{V}} \mathrm{Tr}(\rho \log_2 \rho) - \mathrm{Tr}(\rho \log_2 \xi).    
\end{align}
In the case of purity, the relative entropy reduces to
\begin{align}
S_P(\rho \vert \vert \mathds{1}/d) = \log{d}-S(\rho), 
\end{align}
where $ S(\rho) = -\mathrm{Tr}(\rho \log{\rho}) $ denotes the von Neumann entropy. Mathematically, maximizing the von Neumann entropy under the constraint that $ \langle I \rangle = \mathrm{Tr}(\rho I) $ for quantum states $\rho$ is a standard textbook task \cite{boyd_vandenberghe_2004}. Physically, we find that the relative entropy of purity shows a distinct qualitative behaviour with respect to the incompatibility quantifier $ C$ from all the other considered resource measures. More precisely, the relative entropy of purity is not only minimized by different states, it also monotonically decreases with increasing $C$ as depicted in Fig. (\ref{Plot_rel_entr_pur}).
\begin{figure}[h!]
\includegraphics[scale =0.35]{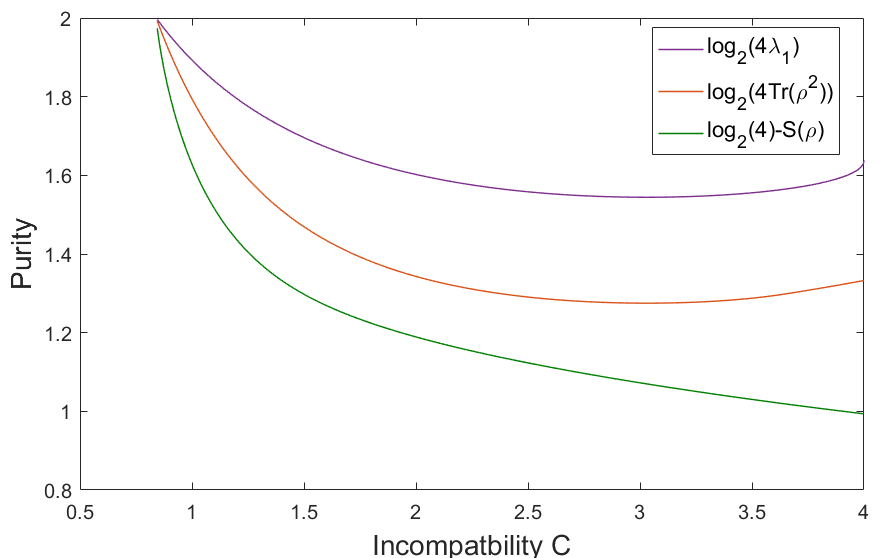}
\caption{Comparison of the logarithm of the generalized robustness of purity, the so-called log-robustness (purple) with the Rényi $2$-purity $\mathcal{P}_2 $ (orange), and the relative entropy of purity (green) for an exemplary violation of $ v = 0.2 $. The minimal relative entropy decreases with increasing $C$, for all $C$, contrary to the other measures as described in the text.}
\label{Plot_rel_entr_pur}
\end{figure} 

\end{document}